\newtheorem{theorem}{Theorem}[section]
\newtheorem{lemma}[theorem]{Lemma}
\title{\textbf{SGRDN-Data learned sparsification of graph reaction-diffusion networks}}
\author{
Abhishek Ajayakumar\thanks{Email: abhishekajay@iisc.ac.in} \and 
Soumyendu Raha\thanks{Email: raha@iisc.ac.in}
}
\begin{document}
\maketitle
\begin{abstract}
Graph sparsification is an area of interest in computer science and applied mathematics. Sparsification of a graph, in general, aims to reduce the number of edges in the network while preserving specific properties of the graph, like cuts and subgraph counts. Computing the sparsest cuts of a graph is known to be NP-hard, and sparsification routines exist for generating linear-sized sparsifiers in almost quadratic running time $O(n^{2 + \epsilon})$. Consequently, obtaining a sparsifier can be a computationally demanding task, and the complexity varies based on the level of sparsity required. \textcolor{black}{We propose \texttt{SGRDN} to extend sparsification to complex reaction-diffusion systems. This approach seeks to sparsify the graph such that the inherent reaction-diffusion dynamics are strictly preserved on the resulting structure.} By selectively considering a subset of trajectories, we frame the network sparsification issue as a data assimilation problem within a Reduced Order Model (ROM) space, imposing constraints to conserve the eigenmodes of the Laplacian matrix ($L = D - A$), the difference between the degree matrix ($D$) and the adjacency matrix ($A$) despite perturbations. We derive computationally efficient eigenvalue and eigenvector approximations for perturbed Laplacian matrices and integrate these as spectral preservation constraints in the optimization problem. To further validate the method's broad applicability, we conducted an additional experiment on Neural Ordinary Differential Equations (Neural ODEs), where \texttt{SGRDN} successfully achieved parameter sparsity. \\
\textbf{Keywords:} Complex networks, Data assimilation, Differential equations, Dimensionality reduction, Reaction-diffusion system, Neural ODEs.
\end{abstract}

\section{Introduction}
\label{sec:introduction}

 Modern deep learning frameworks, which utilize recurrent neural network decoders and convolutional neural networks, are characterized by a significant number of parameters. Finding redundant edges in such networks and rescaling the weights can prove very useful; one such work is done as in \cite{https://doi.org/10.48550/arxiv.2211.04598}. Another line of thought involves learning differential equations from data (\cite{Raissi_2018, Raissi_2018a, lu_data}), where the data generation part is rendered accessible by using sparse neural networks. These types of work find applications in areas like fluid dynamics, see \cite{fluidflow_arxiv}. Reducing the number of parameters in neural ODEs (ODENets), which can be used to model many complex systems is another motivation for this work.
We define a graph $G = (V, E, w)$ where $V$, $E$ denote the set of vertices and edges of the graph and ${w}$ denotes the weight function, that is ${w}: V\times V \rightarrow \mathbb{R}$. A sparsifier of a graph aims at reducing the number of edges in the graph such that the new reweighted graph $\bar{G} = (V, \bar{E}, \bar{w})$ achieves a specific objective like preserving the quadratic form of the Laplacian matrix, cuts, subgraph counts. E.g., In the case of spectral sparsifiers, we have
\[
\begin{split}
\frac{x^T L x}{(1+\epsilon)} \leq \, x^T \bar{L} x \leq \,(1 + \epsilon) x^T L x \;\; \forall \,x \in \mathbb{R}^n,\;\; \\   \textrm{where}\; \vert V \vert = n,  \epsilon \in (0,1). \\
\end{split}
\]
The notion of cut sparsification, which produces weighted subgraphs for which each cut is approximated, was introduced by Karger (\cite{karger_1994, karger_1999}). The idea involves sampling individual edges of the graph with probability $p$; the resulting sparsifier has size $O(pm)$, where $m = |E|$, the cardinality of the set of edges of the graph. An improved work for producing cut sparsifiers was based on the non-uniform sampling of edges as shown in (\cite{karger_2015, karger_1996}). There are other types of sparsifiers which preserve subgraph counts, described as in \cite{ahn_guha_mcgregor_2012}.
  
\par
Spielman and Teng introduced a more advanced form of sparsification known as spectral sparsifiers (\cite{spielman_teng_2011}). Spectral sparsifiers approximate the quadratic form of the graph Laplacian matrix. Such sparsifiers help in solving systems of linear equations (\cite{spielman_teng_2004, spielman_teng_2011}). The resulting sparsified graph exhibits an edge count approximated as $O(n \log(n)^c$), where $c$ is a large constant. In their study, Spielman and Srivastava (\cite{spielman_srivastava_2008}) demonstrated that it is possible to construct a sparsifier for any graph with $n$ vertices, containing $O(\frac{n \log(n)}{\epsilon^2}$) edges. Additionally, they introduced a linear algorithm to achieve this objective. Another work by Goyal, Rademacher and Vempala (\cite{goyal_rademacher_vempala_2009}) proposed to create cut sparsifiers from random spanning trees. Batson, Spielman and Srivastava (\cite{batson_spielman_srivastava_2014}) construct sparsifiers with size $O(\frac{n}{\epsilon^2}$) edges in $O(\frac{m n^3}{\epsilon^2 }$) time using a deterministic algorithm. Fung and Harvey devised an algorithm that samples edges based on the inverse of edge connectivity to produce cut sparsifiers (\cite{DBLP:journals/corr/abs-1005-0265}). Additionally they also describe an approach to produce cut sparsifiers from uniformly random spanning trees. A meta-learning-based approach for graph sparsification is discussed in (\cite{metalearning_sparsify}). \textcolor{black}{While the literature presents various graph sparsification techniques for different objectives, methods that preserve the dynamical behaviour of the system during sparsification remain unexplored.} 
\par
In this section, we review literature that concentrates on conserving the dynamical system described on a graph. The paper by Cencetti, Clusella, and Fanelli (\cite{cencetti_clusella_fanelli_2018}) explores methods for altering the network's topology while maintaining its dynamic characteristics intact; the new network obtained was isodynamic to the former. Kumar, Ajayakumar and Raha, in their work (\cite{KUMAR2021100948}), provide mathematical conditions for finding stable bi-partitions of a complex network and also provide a heuristic algorithm for the same. Graph partitioning does not alter the edge weights of graphs; however, it can cause the deletion of edges. This work shows that constrained internal patch dynamics with sufficient network connectivity ensure stable dynamics in subnetworks around its co-existential equilibrium point. 
\par The contributions of our article can be summarized as follows:
\begin{enumerate} \item In Section \ref{costfunc}, we introduce the \texttt{SGRDN} algorithm, which is used to sparsify reaction-diffusion dynamical systems on graphs. Towards achieving this goal: \begin{itemize} \item We propose estimates for eigenvalues and eigenvectors of the Laplacian matrix in Section \ref{eigen_approx}. \item The advantages of using such estimates are expounded in this section. \item The complexity of the procedure is described in Section \ref{timecomplexity} and empirically verified in Figures \ref{costfunccomplexity}, \ref{nonlinearcomplexity}. \end{itemize} \item In Theorem \ref{theorem_estimate}, we discuss bounds on the estimates proposed for the approximation of eigenvalues and eigenvectors of the Laplacian matrix under perturbations. \begin{itemize} \item Table \ref{tab:my_label} demonstrates how these estimates work well in practice. \end{itemize} \item In Section \ref{results}, we conduct extensive experimentation of the \texttt{SGRDN} algorithm on several graph datasets. \begin{itemize} \item The sparsity levels obtained on various graph datasets are shown in Table \ref{Allresults}. \item In Section \ref{odenetsection}, we demonstrate how the \texttt{SGRDN} algorithm can be adapted for generating sparse ODENets. \end{itemize} \end{enumerate}
\par
The structure of the article is as follows: we begin by presenting background material from spectral graph theory. A brief overview of reaction-diffusion systems on graphs is provided in the next section. We briefly introduce the adjoint method for data assimilation, which can be applied to the parameter estimation problem. In the subsequent sections, we present the reduced vector field for the chemical Brusselator reaction-diffusion system on the graph and discuss various performance metrics. Challenges involved in ensuring isodynamic behaviour are presented in the next section; the challenging aspect is obtaining good estimates of the eigenpair of the perturbed Laplacian matrix iteratively. We explore approximations of the eigenpair and their effectiveness in the next section, followed by formulating the optimization problem and its challenges. We present experimental results on real-world graphs to demonstrate the effectiveness of our approach. In Section \ref{odenetsection}, we show how to produce sparse neural ODENets using the framework for a linear dynamical system.

\section{Background}
\subsection{Laplacian matrix:($L$)}
The Laplacian of an undirected weighted graph $G = (V,E,w)$, where $|V| = n, |E| = m$ with a weight function $w:V\times V\rightarrow \mathbb{R}$ is given by
\[
  L \left(u,v\right)=\;\left\lbrace \begin{array}{ll}
d_v -w\left(u,v\right) & \mathrm{if}\;u=v,\\
-w\left(u,v\right) & \mathrm{if}\;u\;\mathrm{and}\;v\;\mathrm{are}\;\mathrm{adjacent},\\
0 & \mathrm{otherwise}\ldotp 
\end{array}\right.  
\]
Where $d_v = \sum_{u \sim v} w(u,v)$ denotes the degree of vertex $v$.
\subsection{Incidence matrix:($B$)}
The incidence matrix $B_{m \times n}$ with a given orientation of edges of a graph is given by
\[
B\left(e,v\right)=\;\left\lbrace \begin{array}{ll}
1 & \mathrm{if}\;v\;\mathrm{is}\;e^{\prime } s\;\mathrm{head},\\
-1 & \mathrm{if}\;v\;\mathrm{is}\;e^{\prime } s\;\mathrm{tail},\\
0 & \mathrm{otherwise}\ldotp 
\end{array}\right.    
\]
The graph Laplacian matrix for a given orientation with weights $\bar{w}_i = \bar{\gamma}_i w_i,\;i = 1,2,\ldots, m$ can be expressed as
\[
    \begin{split}
    \bar{L} = B^T W^{1/2} \textrm{diag}(\bar{\gamma}) W^{1/2} B \quad \textrm{, where } \bar{\gamma} \in \mathbb{R}^m  \textrm{ represents the multipliers of the weights}, \\    
    \end{split}
\label{multipliers_weight}
\]
\textrm{$W$ represents the diagonal matrix of weights $w_p,\; p = 1,2,\ldots,m.$}

\section{Dynamical systems involving graphs}
\label{main-text}
The general form of a reaction-diffusion system is outlined below, as presented in the work of Cencetti et al. (\cite{cencetti_clusella_fanelli_2018}). Each node in the network, denoted as $i$, possesses an $\mathrm{m}$-dimensional variable $r_i(t) \in \mathbb{R}^{\mathrm{m}}$ that characterizes its activity at time $t$. The evolution of $r_i$ over time follows a specific set of rules, beginning from an initial state $r_i(0)$ as described below,
\[ \frac{dr_i}{dt} = \mathcal{F}(r_i) + K \sum_{j=1}^{n} A_{ij} \mathcal{G}(r_j-r_i) \hspace{5mm} i = 1,2,....n\]
\textcolor{black}{Here, $\mathcal{F}$ denotes the reaction component and the remaining terms explain the diffusion in the graph with $A$ denoting the adjacency matrix of the graph.} $\mathcal{F} : \mathbb{R}^{\mathrm{m}} \rightarrow \mathbb{R}^{\mathrm{m}}$, $\mathcal{G}: \mathbb{R}^{\mathrm{m}} \rightarrow \mathbb{R}^{\mathrm{m}}$. One such reaction-diffusion system is given by the alternating self-dynamics Brusselator model~\cite{LANDSBERG1972, cencetti_clusella_fanelli_2018}, where we couple the inter-patch dynamics using the graph Laplacian matrix,
\begin{equation}
    \left\lbrace \begin{array}{l}
\dot{x}_i \;=\;a-\left(b+d\right)x_i +c\;x_i^2 y_i -D_x \;\sum_j L {\;}_{\mathrm{ij}} x_j \\
{\dot{y}_i \;=\;{b}}x_i -c\;x_{\;i}^2 y_i -D_y \;\sum_j L_{\mathrm{ij}} \;y_j \;\;\;\;\;\;\;
\end{array}\right.
\label{Rd dynamics}
\end{equation}
\textcolor{black}{
With $r_i = (x_i, y_i) \in \mathbb{R}^2$, the system defined by Equation~\eqref{Rd dynamics} possesses fixed points as explained in~\cite{cencetti_clusella_fanelli_2018}:
\[r^{\star} = (x_i,y_i) = \left(\frac{a}{d}, \frac{bd}{ac}\right) \quad \forall \, i.\]}
\textcolor{black}{Our primary objective is to find a graph $\bar{G} = (V,\bar{E}, \bar{w})$ with a new sparse weight function $\bar{w}$ such that on solving Equation (\ref{Rd dynamics}) using the updated graph $\bar{G}$ we get solutions $\bar{x_i} \in B(x_i,\delta_i), \bar{y_i} \in  B(y_i, \Delta_i) \;\forall \;i$, where $B(z,r) = \{w : \|w - z\| \leq r\}$ denotes a closed ball of radius $r$ centered at $z$, and $\delta_i$ and $\Delta_i$ are relatively small.} We will demonstrate the problem on a linear dynamical system involving the Laplacian matrix. The heat kernel matrix involving the Laplacian matrix is given as follows, \\
\begin{equation}
H_t  = e^{-Lt}    
\label{kernel}    
\end{equation}
\\
\textcolor{black}{
Differentiating Equation (\ref{kernel}) with respect to time, we get the matrix differential equation \[\frac{dH_t}{dt} = -L H_t, \;\; H_t \in \mathbb{R}^{n\times n}.\]
This system can be considered as a system of ODEs of the form
\begin{equation}
    \frac{d\mathcal{F}(i,t)}{dt} = -L \mathcal{F}(i,t), \quad \mathcal{F} \in \mathbb{R}^n.
    \label{Linear_ode}
\end{equation}
We denote the solution at node $i$ and time $t$ by $\mathcal{F}(i,t)$. According to the fundamental theorem for systems of linear first-order ODEs, we know that the solution to such a system with initial condition $\mathcal{F}(i,0) = f(i)$ is $\mathcal{F}(i,t) = e^{-Lt} f(i)$ and this solution is unique.} The underlying quantity which follows ODE described in Equation (\ref{Linear_ode}) could be temperature, chemical reactants, and ecological species. If we use an approximation for $L$, let it be $\bar{L}$ in the above equation with the same initial condition,
the solution now will be solutions to the ode of the form
\begin{equation}
\hspace{25mm}\frac{\partial \bar{\mathcal{F}}}{\partial t} = -\bar{L} \bar{\mathcal{F}} \label{lineardiffusion}   
\end{equation}

$\bar{\mathcal{F}}(i,t) = e^{-\bar{L}t}f(i)$. The error in this substitution at node $i$ for a time $t$ is given by \[Q(i,t) = \mathcal{F}(i,t) - \bar{\mathcal{F}}(i,t),\] \[Q(i,t) = (\sum_{k=1}^{\infty} \frac{{L}^k (-t)^k}{k!})f(i) - (\sum_{k=1}^{\infty} \frac{{\bar{L}}^k (-t)^k}{k!})f(i) .\]
\par Given random initial conditions \{$\mathcal{F}_{0\alpha}\} \in \mathbb{R}^n, \alpha = 1,2 ....,\omega$, one could discretize system as described by Equation (\ref{lineardiffusion}) to obtain solutions at various time points using a numerical scheme. The discretization of the dynamics for the system described by Equation (\ref{lineardiffusion}) is as follows,

\begin{equation}\hspace{5mm}
\bar{\mathcal{F}}^{\alpha}_{q+1} = M(\bar{\mathcal{F}}^{\alpha}_{q},\bar{w})    
\label{forward_eq}
\end{equation} 
where $M:\mathbb{R}^{n} \times \mathbb{R}^m \rightarrow \mathbb{R}^{n}$ 
with $M = (M_{1},M_{2},.....,M_{n})^T, M_{i} = M_{i}(\bar{\mathcal{F}}^{\alpha}_q,\bar{w})$ for $1\leq i \leq n$ and $ 0\leq q \leq S $, $S$ denotes the last time-step, and $\bar{\mathcal{F}}^{\alpha}_{0}  \in \mathbb{R}^n$ 
is the initial condition which is assumed to be known. 
The primary objective would be to obtain a sparse vector of weights $\bar{w}$ which minimizes the error $Q$ as explained above, subject to the discrete dynamics shown in Equation (\ref{forward_eq}). 
This forms the core of the data assimilation problem. 
To know more about data assimilation, one could refer to (\cite{lewis_lakshmivarahan_dhall_2009},
\cite{sarkka}, \cite{law_stuart_zygalakis_2015}).
\section{Adjoint method for data assimilation}
Parameter estimation problems have rich literature (\cite{1p}, 
\cite{caracotsios_stewart_1985},
\cite{3p},
\cite{4p}). 
Using Unscented and Extended Kalman filtering-based approaches is not recommended when constraints on parameters are imposed (\cite{EKf_mhe}). We use the Adjoint sensitivity method for the parameter estimation (\cite{lakshmivarahan_lewis_2010}). In this section, we provide a brief overview of this method and how it could be used in parameter estimation. \\
\small\textbf{Statement of the Inverse Problem:} The objective is to estimate the optimal control parameter ${\alpha}$ that minimizes the cost function $J({\alpha})$ based on a given set of observations  $\{\textbf{$z_k$} \in \mathbb{R}^n \mid 1 \leq k \leq N\}$. These observations represent the true state, and the model state $x_k$ follows a specific relationship described by $x_k = M(x_{k-1}, {\alpha})$. The state variable $x$ belongs to $\mathbb{R}^n$, denoted as $x = (x_1, x_2, \dots, x_n)$, and $\mathcal{F}:\mathbb{R}^n \times \mathbb{R}^p \rightarrow \mathbb{R}^n$ represents a mapping function. Specifically, $\mathcal{F} = (\mathcal{F}_1, \mathcal{F}_2,\dots ,\mathcal{F}_n)$ where $\mathcal{F}_i = \mathcal{F}_i(x, {\alpha})$ for $1 \leq i \leq n$ and ${\alpha} \in$ $\mathbb{R}^p$. \newline
Consider the nonlinear system of the type
\begin{equation}
\frac{d x}{d t} = \mathcal{F}(x, {\alpha}),
\label{IVPprob}
\end{equation}
 with $x(0) = c,$ being the initial condition assumed to be known, ${\alpha}$ is a system parameter assumed to be unknown. The vector $x(t)$ denotes the state of the system at time $t$, and $\mathcal{F}(x, \alpha)$ represents the vector field at point $x$. If each component of $\mathcal{F}_i$ of $\mathcal{F}$ is continuously differentiable in $x$, the solution to the initial value problem (IVP) exists and is unique.
\newline \newline
The Equation (\ref{IVPprob}) can be discretized using many schemes (\cite{burden_faires_burden_2016}), and the resulting discrete version of the dynamics can be represented as
\begin{equation*}
x_{k+1} = M(x_k, \alpha)
\label{discretivp}
\end{equation*}

The function $M:\mathbb{R}^n \times \mathbb{R}^p \rightarrow \mathbb{R}^n$ can be represented as $M = (M_1, M_2, \ldots, M_n)$, where each $M_i = M_i(x_k, \alpha)$ for $1 \leq i \leq n$.  \newline \newline

Let us define a function $J:\mathbb{R}^p \rightarrow \mathbb{R}$ as follows (\cite{lewis_lakshmivarahan_dhall_2009})
\begin{equation*}
    J(\alpha) = \frac{1}{2}\sum_{k=1}^N (z_k - x_k)^T(z_k - x_k)
\end{equation*}
\subsection{Adjoint sensitivity with respect to parameter $\alpha$}
\label{adj_param}
\[
\begin{array}{l}
\delta J(\alpha) \;= \sum_{k=1}^N \eta^T_k \delta x_k \;\;   \;\text{\; for more details refer \cite{lakshmivarahan_lewis_2010}}\\ \\
\;\;\;\;\;\;\;\;\;\;\;\;= \sum_{k=1}^N \eta^T_k V_k \delta \alpha\\ \\
\;\;\;\;\;\;\;\;\;\;\;\;\;= \left(\sum_{k=1}^N V_k^T \eta_k\right)^T \delta \alpha\\ 
\end{array} 
\]
\begin{equation}
    \text{where \;} V_k = A_{k-1}V_{k-1} + B_{k-1}
\label{TLS}
\end{equation}
\textcolor{black}{\[A_{k-1} = D_{k-1}(M(x_{k-1},\alpha)) \in \mathbb{R}^{n\times n} \text{\;and\;} B_{k-1} = D_{\alpha}(M(x_{k-1},\alpha)) \in \mathbb{R}^{n\times p},\]
where $D_{k-1}(M(x_{k-1},\alpha))$ and $D_{\alpha}(M(x_{k-1},\alpha))$ denote the Jacobian matrices of $M(x_{k-1},\alpha)$ with respect to $x_{k-1}$ and $\alpha$, respectively.} The recursive relation focuses on finding the sensitivity of $J$ with respect to the parameter $\alpha$. Iterating Equation (\ref{TLS}), we get \[ \begin{array}{l}
V_k = \sum_{j = 0}^{k-1} (\prod_{s = j+1}^{k-1} A_s) B_j \\ 
\end{array}\]
The above recursive relation deals with the product of sensitivity matrices.   
\newline
The first variation of $J$ after specific calculations is given by (see \cite{lewis_lakshmivarahan_dhall_2009}) 
\begin{equation}
\delta J = \left(\sum_{k=1}^N  \sum_{j = 0}^{k-1}  B_j^T(\prod_{s = k-1}^{j+1} A_s^T) \eta_k\right)^T \, \delta\alpha\;
\label{first_var}
\text{, where}\;\;\eta_k = x_k - z_k, \\ \prod\,\text{denotes reverse product.}
\end{equation}

From first principles
\[\delta J = \nabla_{\alpha} J(\alpha)^T \delta \alpha\]
Comparing first principles with Equation (\ref{first_var}), we get \[\nabla_{\alpha} J(\alpha) = \sum_{k=1}^N  \sum_{j = 0}^{k-1}  B_j^T(\prod_{s = k-1}^{j+1} A_s^T) \eta_k  \]

\subsubsection{Adjoint method Algorithm}
\label{Adjointalgo}
\begin{enumerate}
    \item Start with $x_0$ and compute the nonlinear trajectory $\{x_k\}_{k=1}^N$ using the model $x_{k+1} = M(x_k, \alpha)$
    \item $\lambda_N = \eta_N$
    \item Compute $\lambda_j = A_j^T \lambda_{j+1} + \bar{\eta_j}$ for $j = N-1 \text{\;to\;} k$, $\bar{\lambda}_k = B_{k-1}^T \lambda_k$, $k = 1 \text{\;to\;} N$, $\bar{\eta}_j = \delta_{j, j_i} \eta_j$ with $\delta_{j,j_i} = 1$ when $j = j_i$ , else 0 
    \item sum = \underline{0}, sum = sum +  $\bar{\lambda}_k$ vectors from $k = 1 \text{\;to\;} N$
    \item $\nabla_{\alpha}J(\alpha) = \text{sum}$
    \item Using a minimization algorithm find the optimal $\alpha^*$ by repeating steps 1 to 5 until convergence
\end{enumerate}
\subsubsection{Computing the Jacobians $D_{k} \left(M\right)$ and $D_{\alpha} \left(M\right)$
}
\label{jacobians}
This section discusses how to compute the Jacobian matrices described in Section \ref{adj_param}. The numerical solution of an ODE at timestep $j+1$($y_{j+1} \approx x_{j+1}$)  with slope function $\frac{dx}{dt} = f(t,x)$ given $d$ slopes($k_d$), where $x \in \mathbb{R}^n$ and $f: \mathbb{R} \times \mathbb{R}^n \rightarrow \mathbb{R}^n$ is given by $y_{j+1} = y_{j} + b_1 k_{1} + b_2 k_2 + ....+ b_d k_d$, where $k_i = hf(t + c_ih, x + a_{i1}k_1 + a_{i2}k_2 + ... a_{id}k_d)$. See \cite{burden_faires_burden_2016} to learn more about numerical analysis.
\par We demonstrate computing $D_{k}(M)$ using an explicit numerical scheme, particularly the Euler forward method on the Lotka-Volterra model.

\[\left\lbrace \begin{array}{ll}
\begin{array}{l}
\frac{{\mathrm{dx}}_1 }{\mathrm{dt}\;}=\alpha x_1 - \beta x_1 x_2  \\ \\
\frac{{\mathrm{dx}}_2 }{\mathrm{dt}}=\delta x_1 x_2 - \gamma x_2 \\ \\
\end{array} & 
\end{array}\right.\]
When we apply the standard forward Euler scheme, we obtain
\[ x_{k+1} = M(x_k) \] \\
where $x_k = (x_{1k}, x_{2k}),\; \textbf{M}(x_k) = (M_1(x_k), M_2(x_k))$ where 
\[\begin{array}{l}
M_1 \left(x_k \right)=x_{1k} +\left(\Delta t\right) \left(\alpha\,x_{1k} -\beta\,x_{1k}x_{2k} \right) \\
M_2 \left(x_k \right)=x_{2k} +\left(\Delta t\right)(\delta\;x_{1k} x_{2k} -\gamma\,x_{2k}) 
\end{array} \]
It can be seen that 
\[
D_{k}(M) = A_k = \left\lbrack \begin{array}{cc}
1+\Delta t\;\left(\alpha -\beta \;x_{2k} \right) & -\Delta t\;\beta \;x_{1k} \\
\Delta t\;\delta \;x_{2k}  & 1+\Delta t\;\left(\delta \;x_{1k} -\gamma \;\right)
\end{array}\right\rbrack
\]

\[
D_{\alpha}(M) = B_k =  
\left\lbrack \begin{array}{cccc}
\Delta t x_{1k}  & \Delta t\left(-x_{1k} x_{2k} \right) & 0 & 0\\
0 & 0 & -\Delta tx_{2k}  & \Delta tx_{1k} {\;x}_{2k} 
\end{array}\right\rbrack   
\]

\subsection{Using reduced order model for dynamical systems}
\label{ROMappro}
When dealing with graphs of considerable size, the dimensionality of the state vector becomes a concern due to its potential impact on computational efficiency. To address this issue, dimensionality reduction techniques like POD, also known as Karhunen–Lo`{e}ve, can be applied. The work described in \cite{rathinam_petzold_2003} discusses the utilization of POD as a means to alleviate the computational complexity associated with large graph sizes. The procedure requires snapshots of solutions of a dynamical system with a vector field $f$ $\in \mathbb{R}^n$,
\begin{equation}\hspace{5mm}
\dot{x} = f(x,t).
\label{dynamicalPOD}
\end{equation}

Using the snapshot solutions, we could create a matrix $\rho$ of projection $\in \mathbb{R}^{k\times n}$ where $k$ denotes the reduced order dimension and $\bar{x}$ denotes the mean corresponding to snapshot solutions, see \cite{rathinam_petzold_2003} for more details. The reduced order model of the system as described by Equation (\ref{dynamicalPOD}) is then given by 

\[
     \dot{z} = \rho f(\rho^T z + \bar{x}, t).
\]
If we are solving an initial value problem with $x(0) =x_0$ , then
the reduced model will have the initial condition $z(0) = z_0$ , where
 
\[
z_0 = \rho(x_0 - \bar{x}).
\]
The reduced order model for the system as described by Equation (\ref{lineardiffusion}) given the matrix $\rho$ of projection is 
\[\dot{z} = -\rho \bar{L}\rho^T z - \rho \bar{L} \bar{x} = \psi(z, \bar{w}).\]
The reduced order model for the system described by Equation (\ref{Rd dynamics}) is obtained as follows. \\
\textbf{Notation:}
If $p$ and $q$ are vectors, then $p \,\odot \, q$ represents element-wise multiplication, and $p^{\circ n}$ raises every element of the vector $p$ to power $n$. Let $B$ be a matrix $\in \mathbb{R}^{n\times n}$, then $B(1:2,:)$ indicates the first 2 rows of $B$ and all the columns of the matrix $B$.  \\
Let multipliers be $\bar{\gamma}$, where $\bar{w} = W \bar{\gamma}$. $W$ represents the diagonal matrix of weights of the graph with weights $w_i, i = 1,2,\ldots, m$. $\bar{w}_i$ denotes the $i-th$ entry of the new weight, using which a new Laplacian matrix can be constructed. \\
Let \textcolor{black}{
\begin{align*}
V_1      &= \rho^T(1:n,:)z + \bar{x}(1:n), \\
V_2      &= \rho^T(n+1:2n,:)z + \bar{x}(n+1:2n), \\
V_3      &= \rho(:, 1:n), \\
V_4      &= \rho(:, n+1:2n), \\
V_5      &= B^T W^{1/2} \Big(\mathrm{diag}\big(W^{1/2}B V_3^T z + W^{1/2}B\bar{x}(1:n)\big)\Big), \\
V_6      &= B^T W^{1/2} \Big(\mathrm{diag}\big(W^{1/2}B V_4^T z + W^{1/2}B\bar{x}(n+1:2n)\big)\Big), \\
V_e      &= \rho^T(e,:)z + \bar{x}_e, \\
\hspace{10mm}V_{n+e}  &= \rho^T(n+e,:)z + \bar{x}_{n+e}, \\
V_f      &= \rho^T(f,:)z + \bar{x}_f, \\
\;\;\;V_{f-n}  &= \rho^T(f-n,:)z + \bar{x}_{f-n}.
\end{align*}
}

\begin{align}  
\label{fa}
 \dot{z}  \qquad & =  \mathbb{\psi}(z,\bar{\gamma})  \\
 & =  \rho \;h(z) -D_x \;V_3 \bar{L} \nonumber
  V_1  -D_y\; V_4 \bar{L}\nonumber
 V_2
\end{align}

\begin{align*}
h(z) = 
\begin{pmatrix} 
a \mathrm{1}_{n\times1} - (b+d)V_1  
+ c[V_1^{\circ 2} ] \odot V_2  \\ \\
bV_1 - c(V_1^{\circ 2})\odot V_2  
\end{pmatrix}   
\end{align*}

\begin{align*}
\hspace{10mm} &\nabla_{z} h_e{\left(z\right)}^T \bigg\vert_{e=1,2,..,n} &&= -\left(b+d\right)\rho^T \left(e,:\right) + c \left(V_e^2 \right)\rho^T \left(n+e,:\right) \
+  2c V_e V_{n+e}\;\rho^T(e,:) 
\\
& \nabla_z {h_f(z)}^T \bigg\vert_{f = n+1,n+2,..,2n} && = b \rho^T(f-n,:)  - c(V_{f-n})^2 (\rho^T(f,:)) - \,2c V_{f-n}\,V_f\,\rho^T(f-n,:) \\
 & D_{z}\psi(z,\bar{\gamma}) &&= \rho D_z h(z) -D_x \, V_3 \bar{L} V^T_3  -D_y\,V_4 \bar{L} V^T_4  \\
& D_{\bar{\gamma} } \psi (z,\bar{\gamma} ) &&= -D_x V_3 V_5  -  D_y \;V_4 V_6
\end{align*}

\subsection{POD method performance}
Article \cite{rathinam_petzold_2003} discuss the computational advantages of applying POD to general linear and nonlinear dynamical systems. Figures~\ref{realgraph_rom} and \ref{randomgraph_rom} present the program execution times for ROM and regular models on real-world and random graphs, respectively.
Figures~1(a) and~2(a) show the execution times for evaluating the vector field in the regular and ROM cases, as defined in Equations~(\ref{Rd dynamics}) and~(\ref{fa}). In both datasets, ROM evaluation is consistently faster than the regular model for this step.
For the matrix--vector products---where the matrix represents the derivative of the forward dynamics with respect to either the states or the parameters, multiplied by a random vector (Figures~1(b),~2(b) and Figures~1(c),~2(c))---the regular model achieves lower execution times than ROM. Specifically, Figures~1(b) and~2(b) correspond to derivatives with respect to individual states, while Figures~1(c) and~2(c) correspond to derivatives with respect to parameters (see Section~\ref{jacobians}). Although ROM is slower in these matrix--vector operations, these steps contribute only a small fraction of the total computational cost.

The overall execution times, shown in Figures~1(d) and~2(d), combine the costs of vector field evaluation and both matrix--vector products. Despite ROM being slower in certain sub-operations, its advantage in vector field computation results in an overall much faster execution compared to the regular model.
\begin{figure}
\begin{subfigure}[t]{0.450\linewidth} \includegraphics[width=\linewidth,height=6cm]{./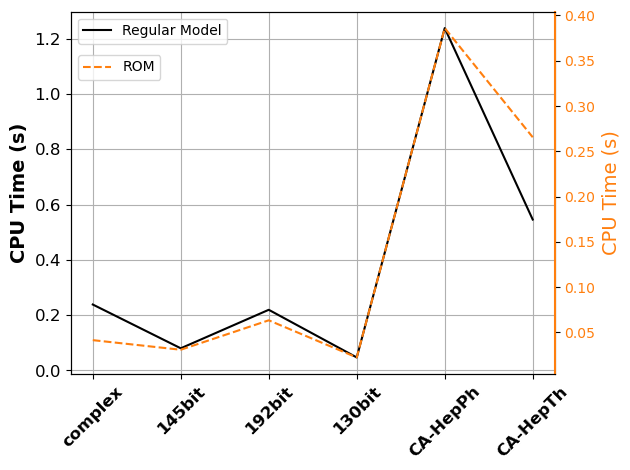}
\caption{Time taken ($t_{1}$) to evaluate vector field $f$ Equation (\ref{dynamicalPOD}) of regular model vs ROM vector field, see Equation (\ref{fa}).}
\end{subfigure}
\hspace{5mm}
\begin{subfigure}[t]{0.450\linewidth}
\includegraphics[width=\linewidth,height=6cm]{./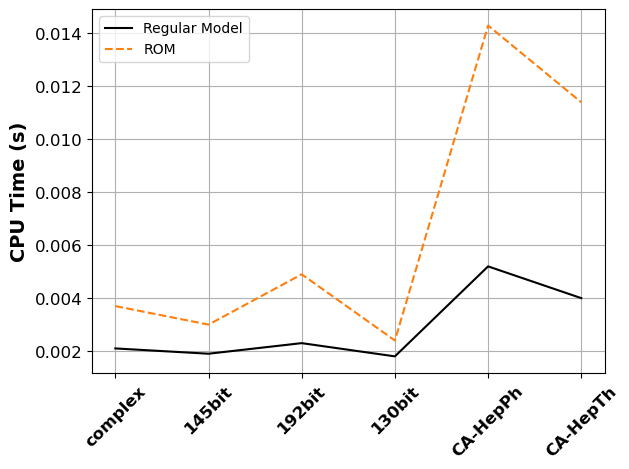}
\caption{Execution times ($t_{2}$) of the matrix-vector product with the matrix representing the derivative of forward dynamics with respect to the individual states for both the regular and ROM model (Section \ref{jacobians}).}
\end{subfigure}
\\
\begin{subfigure}[t]{0.450\linewidth}
\includegraphics[width=\linewidth,height=6cm]{./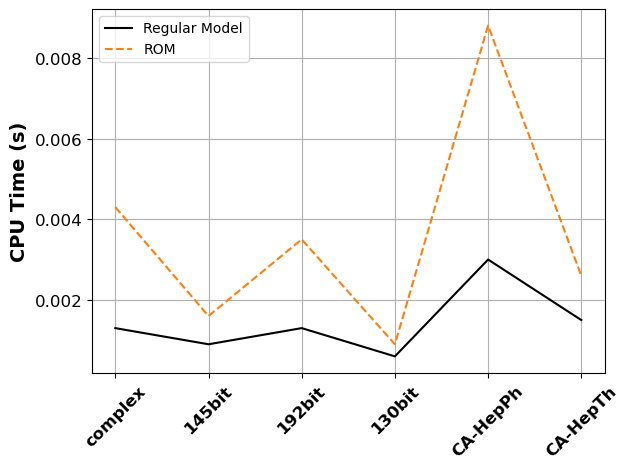}
\caption{Execution times ($t_3$) when evaluating the matrix-vector product with matrix representing the derivative of the forward dynamics with respect to the parameters (Section \ref{jacobians})}
\end{subfigure}
\hspace{5mm}
\begin{subfigure}[t]{0.450\linewidth} \includegraphics[width=\linewidth,height=6cm]{./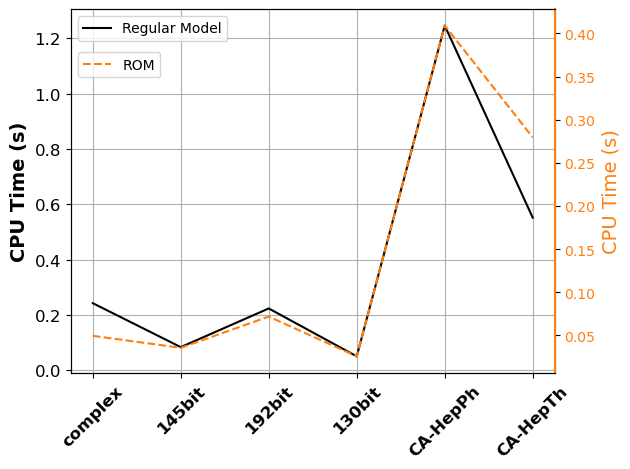}
\caption{ Overall time taken ($t_4 = t_1 + t_2 + t_3$) for ROM versus regular model.}
\end{subfigure}

\caption{Computational performance analysis of the chemical Brusselator model (Equation \ref{Rd dynamics}) represented on real-world graphs with and without POD.}
\label{realgraph_rom}
\end{figure}

\begin{figure}
\begin{subfigure}[t]{0.450\linewidth} \includegraphics[width=\linewidth,height=6cm]{./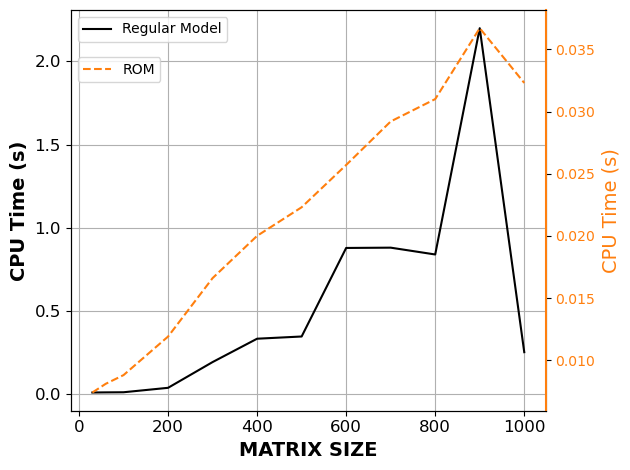}
\caption{Time taken ($t_1$) to evaluate vector field $f$ Equation \ref{dynamicalPOD} of regular model vs ROM vector field, see Equation (\ref{fa}).}
\end{subfigure}
\hspace{5mm}
\begin{subfigure}[t]{0.450\linewidth}
\includegraphics[width=\linewidth,height=6cm]{./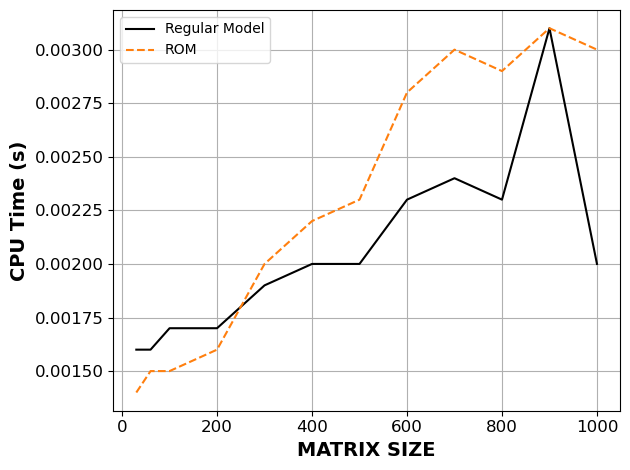}
\caption{Execution times ($t_2$) of the matrix-vector product with the matrix representing the derivative of forward dynamics with respect to the individual states for both the regular and ROM model (Section \ref{jacobians}).}
\end{subfigure}
\\
\begin{subfigure}[t]{0.450\linewidth}
\includegraphics[width=\linewidth,height=6cm]{./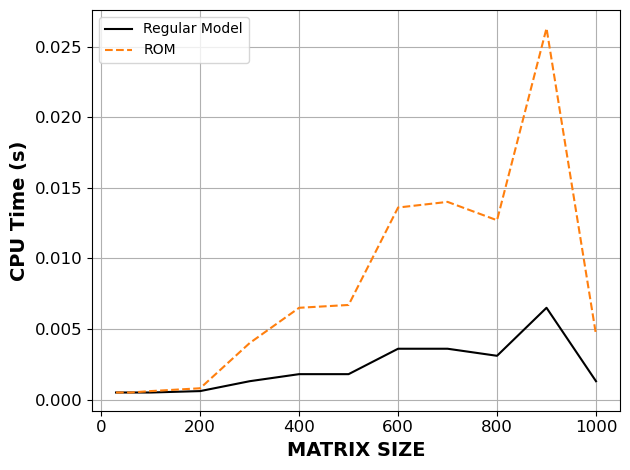}
\caption{Execution times ($t_3$) when evaluating the matrix-vector product with matrix representing the derivative of the forward dynamics with respect to the parameters (Section \ref{jacobians})}
\end{subfigure}
\hspace{7mm}
\begin{subfigure}[t]{0.450\linewidth} \includegraphics[width=\linewidth,height=6cm]{./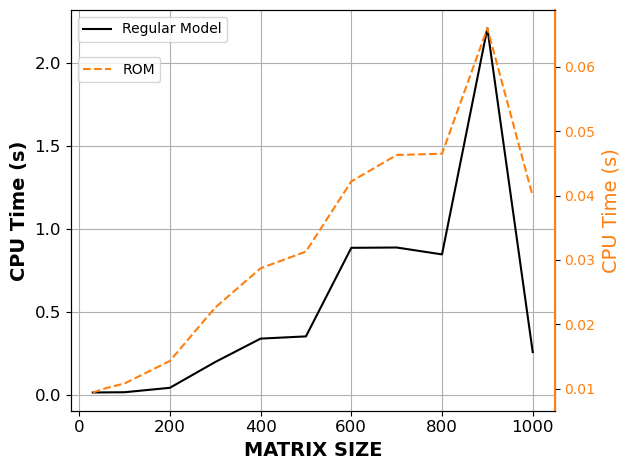}
\caption{ Time taken ($t_4 = t_1 + t_2 + t_3$) for ROM versus regular model.}
\end{subfigure}

\caption{Computational performance analysis of the chemical Brusselator model (Equation \ref{Rd dynamics}) represented on random graphs with and without POD.}
\label{randomgraph_rom}
\end{figure}




\section{Methodology Outline for Sparse Graph Construction with Trajectory-Based Pattern Preservation and Eigenmode Approximations}
\textcolor{black}{We present the \texttt{SGRDN} framework (Figure~\ref{datsapproach}) for efficient sparsification of reaction–diffusion systems on undirected graphs. The pipeline integrates simulation, POD-based model reduction, and an optimization problem with constraints on spectral properties, connectivity, and weight non-negativity, producing a sparse graph as the final output.}

Considering a subset of trajectories of a complex system on an undirected graph, the objective is to create an edge sparsifier (i.e. graphs with fewer edges than the original graph), and the sparse graph should produce patterns similar to the given subset of trajectories. The resultant sparse graph should adeptly mirror the behavioural patterns exhibited by the provided trajectory subset and also under random perturbations to the equilibrium point, {which is referred to as isodynamic behaviour}. This problem is posed as a constrained optimization problem where the objective function is made to include a term involving the difference between the observations from the given trajectories projected into a reduced dimension at various time points and the state vectors generated by the reduced order forward model based on the sparsified graph. An $\ell_1$ norm term of the weight multipliers ($\bar{\gamma}$) (Section \ref{multipliers_weight}) is also introduced as a penalty term in the objective function to induce sparsity. 

The new weight vector $\bar{w} = \textrm{diag($w$)} \bar{\gamma}$. We denote the perturbed Laplacian matrix $L^{'} = L + B^T E B, $ where $E = \text{diag(}\bar{w} -w )$. When this perturbed Laplacian matrix has many eigenmodes similar to the eigenmodes of the original network, the resulting patterns in the new graph will be highly correlated with the patterns in the original network. Thus any patterns or behaviours induced in the original network will likely be present in the new graph, see \cite{cencetti_clusella_fanelli_2018}. Computation of eigenmodes of the Laplacian matrix can be computationally intensive, so we propose approximations to determine the eigenmodes of the Laplacian matrix under perturbations, see Section \ref{estimate_eval}.

\subsection{Ensuring isodynamic behaviour}
This section discusses techniques to ensure isodynamic behaviour for dynamics on the sparse graph and the challenges involved. The dispersion relation connects the stability of the system as described by Equation (\ref{Rd dynamics}) with the Laplacian matrix eigenvalues. \textcolor{black}{The contribution of the $\alpha$-th eigenvalue of the Laplacian matrix ($\lambda_{\alpha}$) to the stability of the system (Equation~(\ref{Rd dynamics})) around the fixed point $r^{*}$ is given by the eigenvalues of the following matrix,}
\[S_{\alpha} = D_r\mathcal{F}(r^{\star}) - \left\lbrack \begin{array}{cc}
D_x \; & 0\\
0 & D_y 
\end{array}\right\rbrack \lambda_{\alpha} \; \]

\begin{figure}[H]
\centering
\includegraphics[width=10cm]{./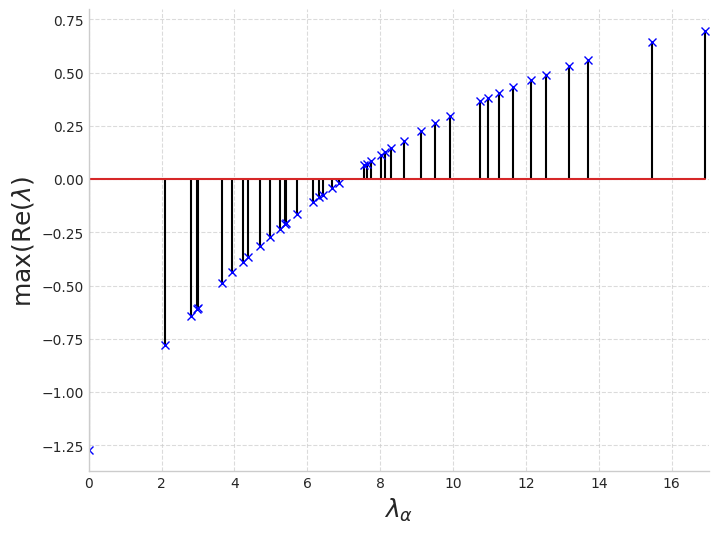}
\caption{Plot showing the stable and unstable eigenmodes of reaction-diffusion system as described by Equation (\ref{Rd dynamics}) on an Erd\H{o}s-R$\acute{e}$nyi random graph.}
\centering
\label{emodes_all}
\end{figure}
Figure~\ref{emodes_all} illustrates the distribution of stable and unstable eigenmodes for the reaction–diffusion system (Equation~\ref{Rd dynamics}) implemented on a random graph. \textcolor{black}{Figure~\ref{emodes_all} plots the maximum real part of the eigenvalues $\mathrm{Re}(\lambda)$ for each Laplacian mode $\lambda_{\alpha}$ of the reaction–diffusion system (Equation~\ref{Rd dynamics}). Modes below the red line are stable, while those above are unstable, with the transition point indicating the onset of pattern-forming instabilities.}

The study in \cite{cencetti_clusella_fanelli_2018} aims at producing a pattern invariant network using techniques like eigenmode randomization and local rewiring. One could use the error function employed in the local rewiring methodology as mentioned in \cite{cencetti_clusella_fanelli_2018} as a constraint to the optimization problem discussed in Section \ref{costfunc} but computing this function iteratively can become computationally expensive. The error function imposed as a constraint will be generally non-smooth. Methods to approach such non-smooth optimization problems using global optimization algorithms like Genetic algorithms are discussed in \cite{nocedalbook}. However, multiple function evaluations can make these methods ineffective. The error function ($\zeta$) introduced below is a modified variant of the one described in \cite{cencetti_clusella_fanelli_2018}.
\[
\zeta = n\zeta_l + \zeta_q
\label{errorfunc}
\]
\[
\zeta_l = \frac{\sum_{i = 1}^{n_p} |\Tilde{\lambda}_{i} - \lambda_{i}|^2}{n_p\;\; \sum_{i = 1}^{n_p} (\lambda_{i})^2}
\]
\[
\zeta_q = \frac{\sum_{i=1}^{n_p} |\Tilde{\phi}^T_{i} \phi_{i} - 1|^2 }{n_p}
\]
\textcolor{black}{Here $\lambda_i$ and $\phi_i$ denote the $i$-th eigenvalue and eigenvector of the original graph Laplacian, while $\Tilde{\lambda}_i$ and $\Tilde{\phi}_i$ represent the corresponding eigenvalue and eigenvector of the modified graph, $n_p$ denotes the number of eigenmodes preserved; we accept the new graph if the error function $\zeta$ is less than a tolerance value.}

In the presence of a connected input graph, we impose a minimum connectivity constraint to minimize the number of disconnected components, which, in turn, affects the error function (see Section \ref{errorfunctionestimate}). The intuition is taken from work in \cite{KUMAR2021100948}, which states that when the sum of the two lowest degrees of the graph component can influence the second smallest eigenvalue of the Laplacian matrix. We control the sum of degrees of the graph by ensuring that the unsigned incidence matrix times the weight vector is greater than a required connectivity level, say $\tau \mathbf{1}_{n\times 1}$ with $\tau$ being a user-defined parameter.
\textcolor{black}{We control the sum of degrees of the graph by ensuring that the product of the unsigned incidence matrix and the weight vector exceeds a required connectivity level, $\tau \mathbf{1}_{n\times 1}$, where $\tau$ is a user-defined parameter. This idea is illustrated in Figure~\ref{example_connect} for an unweighted graph with four nodes. Here, $Q$ denotes the unsigned incidence matrix, the weight vector is $w = \mathrm{diag}(\mathbf{1}_{5\times 1})\gamma$, with $\gamma$ being the edge multipliers (shown in red), and the degree vector is computed as $d = Q\gamma$.
}
\begin{figure}[htb]
\centering
\includegraphics[width=10cm]{./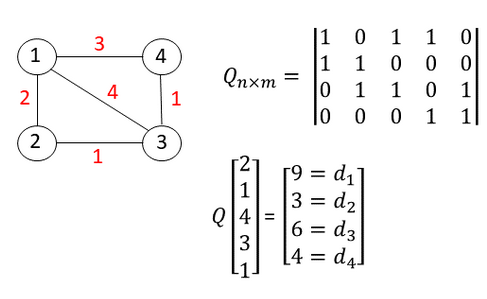}
\caption{{Example of imposing the connectivity constraint using the unsigned incidence matrix for an unweighted 4-node graph. Edge multipliers are shown in red.}
}
\centering
\label{example_connect}
\end{figure}

\subsubsection{Eigenvalue and eigenvector computation for the 
Laplacian matrix under perturbations}

\begin{lemma} \label{sherman}
(Sherman-Morrison formula). If A is a nonsingular $n \times n$ matrix and x is a vector, then
\[
(A + xx^T)^{-1} = A^{-1} - \frac{A^{-1}xx^TA^{-1}}{1+x^TA^{-1}x}.
\]
\end{lemma}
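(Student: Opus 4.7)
The plan is to verify the identity by direct multiplication. First I would note that the hypothesis ``$A$ is a singular $n \times n$ matrix'' appears to be a typo for ``non-singular'' (i.e.\ invertible), since $A^{-1}$ appears on the right-hand side; I would state the lemma under the assumption that $A$ is invertible and that the scalar $1 + x^T A^{-1} x$ is nonzero, which is exactly what is needed for the denominator to make sense. Under these conditions, the claim reduces to showing
\[
(A + xx^T)\left( A^{-1} - \frac{A^{-1} x x^T A^{-1}}{1 + x^T A^{-1} x}\right) = I,
\]
because a square matrix with a right inverse has that matrix as its two-sided inverse.

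To carry this out cleanly, I would abbreviate the scalar $\alpha := x^T A^{-1} x$ and then expand the product by distributing. The first term is $A A^{-1} = I$. The remaining three terms are $x x^T A^{-1}$, $-\frac{x x^T A^{-1}}{1+\alpha}$, and $-\frac{x x^T A^{-1} x x^T A^{-1}}{1+\alpha}$. The crucial observation is that in the last term the middle factor $x^T A^{-1} x$ is exactly the scalar $\alpha$, which commutes with everything and can be pulled out, turning that term into $-\frac{\alpha \, x x^T A^{-1}}{1+\alpha}$. Collecting the three nontrivial terms with a common factor $x x^T A^{-1}$ leaves the bracketed scalar $1 - \frac{1}{1+\alpha} - \frac{\alpha}{1+\alpha} = 1 - \frac{1+\alpha}{1+\alpha} = 0$, so the product equals $I$ as required.

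There is no genuine obstacle: the argument is a short algebraic manipulation once one recognizes that $x^T A^{-1} x$ is a scalar and may be freely relocated. The only real subtlety worth flagging is the well-definedness condition $1 + x^T A^{-1} x \neq 0$; for the application later in the paper, where $A$ is the (shifted) graph Laplacian and $x$ is a column of a scaled incidence matrix, positive semidefiniteness of the relevant operators makes this automatic, so I would mention this sufficient condition rather than attempting a more general statement.
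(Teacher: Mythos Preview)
Your direct-multiplication argument is correct and is the standard proof of the Sherman--Morrison formula; you are also right that ``singular'' in the statement is a typo for ``nonsingular,'' and that the side condition $1 + x^T A^{-1} x \neq 0$ is needed. Note, however, that the paper does not actually supply a proof of this lemma: it is stated as a classical identity and used as a black box, so there is no paper proof to compare against. Your verification fills that gap adequately.
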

\begin{lemma}
(matrix determinant lemma) If A is nonsingular and x is a vector, then
\[
\text{det}(A + xx^T) = \text{det}(A)(1+x^TA^{-1}x).
\]
\label{MDL}
\end{lemma}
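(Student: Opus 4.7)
My approach would be the classical Schur-complement trick: build a single $(n+1)\times(n+1)$ block matrix whose determinant can be computed two different ways, one expression producing $\det(A+xx^T)$ and the other producing $\det(A)(1+x^T A^{-1}x)$. Specifically, I would form
$$
M = \begin{pmatrix} A & -x \\ x^T & 1 \end{pmatrix}
$$
and exhibit two block-triangular factorizations of $M$ whose outer factors have unit diagonals, so that $\det(M)$ can be read off immediately from the central diagonal block in each case.

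Concretely, the first factorization, which uses invertibility of $A$, would be
$$
M = \begin{pmatrix} I & 0 \\ x^T A^{-1} & 1 \end{pmatrix}\begin{pmatrix} A & 0 \\ 0 & 1+x^T A^{-1}x \end{pmatrix}\begin{pmatrix} I & -A^{-1}x \\ 0 & 1 \end{pmatrix},
$$
giving $\det(M)=\det(A)\,(1+x^T A^{-1}x)$ since the outer factors are unit-triangular. The second factorization would eliminate the off-diagonal blocks using the scalar bottom-right entry,
$$
M = \begin{pmatrix} I & -x \\ 0 & 1 \end{pmatrix}\begin{pmatrix} A+xx^T & 0 \\ 0 & 1 \end{pmatrix}\begin{pmatrix} I & 0 \\ x^T & 1 \end{pmatrix},
$$
giving $\det(M)=\det(A+xx^T)$. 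Equating the two values of $\det(M)$ would complete the proof; verifying each factorization reduces to a single block-matrix multiplication.

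There is essentially no serious obstacle here: the only care point is aligning signs in the block factorizations so that the two Schur complements really do produce the two sides of the identity rather than, say, $\det(A-xx^T)$. An alternative route that leans on Lemma \ref{sherman} would write $A+xx^T = A(I+A^{-1}xx^T)$, reducing the claim to the rank-one determinant identity $\det(I+uv^T)=1+v^T u$, which itself follows by observing that $I+uv^T$ acts as the identity on the hyperplane $v^{\perp}$ and scales $u$ by $1+v^T u$, so its spectrum is $\{1,\ldots,1,1+v^T u\}$. This alternative is shorter once the auxiliary identity is granted, but the block-matrix argument is self-contained and so is the version I would include.
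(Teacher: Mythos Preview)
Your argument is correct: the Schur-complement double factorization of the bordered matrix $M$ is the standard textbook proof, and your alternative via $\det(I+uv^T)=1+v^Tu$ is equally valid. Note, however, that the paper itself offers no proof of this lemma---it is simply stated alongside the Sherman--Morrison formula as a known auxiliary result, so there is nothing to compare against beyond observing that your write-up would supply what the paper leaves implicit.
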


We present the challenges in finding the eigenpair during each iteration of the optimization procedure. If there is a change to only one of the edges of the graph, then one could use the approximate model described as in \cite{batson_spielman_srivastava_2014} making use of Lemma \ref{sherman} and \ref{MDL} to find the eigenpairs. At iteration $q$ of the algorithm, we will get a new graph with the weights modified according to the scale factor $\bar{\gamma}^q$. We define the matrix $E_1 = \text{diag(}\bar{\gamma}^q - \mathbf{1}_{m \times 1}$) and $\Tilde{\lambda} = \mathrm{diag(} \lambda_1, \lambda_2, \lambda_3,..,\lambda_n)$ denotes the eigenvalues of matrix $L$. We define the new Laplacian matrix as $L^{'} = L + B^T W^{1/2} E_1 W^{1/2}B.$ The characteristic polynomial of the matrix $L^{'}$ is given by
\[
\chi_{L^{'}}(\lambda, E_1) = \mathrm{det(} L^{'} - \lambda I) = \mathrm{det(} \Phi \, \Tilde{\lambda}\, \Phi^T + B^T W^{1/2} E_1 W^{1/2}B - \lambda I) \; 
\]
Using the property of determinants, we get the following.
\[
\chi_{L^{'}}(\lambda, E_1) = \text{det(} \Sigma + U^T E_1 U) \,\, \text{where}\,\, U^T = \Phi^T B^T W^{1/2}
\]
\[
\Sigma = diag(1, \lambda_2 - \lambda,...., \lambda_n - \lambda)
\]
\begin{align}
 \begin{array}{l}
\text{ $\chi_{L^{'}}(\lambda$, $E_1$) =  det}\left(\left\lbrack \begin{array}{cccc}
1 & 0 & \ldotp  & 0 \\
0 & u_2^T E_1 u_2 +\lambda_2 -\lambda \; & \ldotp  & u_2^T E_1 u_n   \\
\ldotp  & u_3^T E_1 u_2  & \ldotp  & u_3^T E_1 u_n   \\
\ldotp  & \ldotp  & \ldotp  & \ldotp   \\
0 & u_n^T E_1 u_2  & \ldotp  & u_n^T E_1 u_n + \lambda_n - \lambda  
\end{array}\right\rbrack \right)  \\
\\
\text{$\chi_{L^{'}}(\lambda$, $E_1$) = det} \left( \left\lbrack \begin{array}{ccc}
\mathrm{trace}\left(u_2 {\;u}_2^T \;E_1\right)+\lambda_2 -\lambda \; & \ldotp  & \ldotp\\
\mathrm{trace}\left(u_3 \;u_2^T \;E_1\right) & \ldotp  & \ldotp \\
\ldotp  & \ldotp   & \ldotp \\
\mathrm{trace}\left(u_n u_2^T \;E_1\right) & \ldotp  & \ldotp
\end{array}
\right\rbrack \right)
\label{proots}
\end{array}   
\end{align}
Finding the determinant of matrices with symbolic variables is described in \cite{https://doi.org/10.48550/arxiv.1304.4691}. We need to precompute only the diagonal elements of $u_i u^T_j$ to obtain the expression for the characteristic polynomial since $D$ is a diagonal matrix. Finding all the roots of the polynomial at each iteration and filtering out the highest roots may be more computationally expensive than finding only extreme eigenvalues as in (\cite{lehoucq_sorensen_yang_1998}, \cite{doi:10.1137/S0895479800371529}).
\\
\par
\textbf{Example:}
\begin{figure}[h]
\hspace{30mm}
\centering
\includegraphics[width=3cm]{./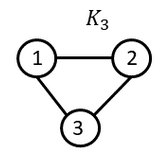}
\caption{Complete graph $\mathbb{K}_3$}
\label{k3graph}
\end{figure}
Let us consider the case of a complete graph on 3 vertices $\mathbb{K}_3$ Figure \ref{k3graph}. The Laplacian matrix of this graph is given by 
\[
\;\;\;L = 
\begin{bmatrix}
2 & -1 & -1 \\
-1 & 2 & -1 \\
-1 & -1 & 2 \\
\end{bmatrix}
\]
The eigenvalues of the matrix $L$ are $\lambda_1 = 0, \lambda_2 = 3, \lambda_3 = 3.$ Now let us perturb two of the weights of this graph and let the multiplier vector $\bar{\gamma} = [1.2, 1.4, 1]$. The matrix $D  = \mathrm{diag(\bar{\gamma}} - \mathbf{1}_{3\times1}).$ The matrix $B^T$ is given as follows,
\[
\;\;\;B^T = 
\begin{bmatrix}
1 & 0 & -1 \\
-1 & 1 & 0 \\
0 & -1 & 1 \\
\end{bmatrix}
\]

The expression given by Equation (\ref{proots}) will then become
$p(\lambda) = \mathrm{det(}  \begin{bmatrix}
3.9374 - \lambda & -0.0786 \\
-0.0786& 3.2626 - \lambda 
\end{bmatrix}  )$. $p(\lambda) = \lambda^2 - 7.2 \lambda + 12.84.$ The roots of the polynomial $p$ are given by 3.2536 and 3.9464, which are the eigenvalues of the matrix $L^{'}$.
\\
The effect on eigenvalues of a matrix with a perturbation is studied in \cite{bhatia_2007}. One such significant result in the field of perturbation theory is given below.
\begin{theorem}
(\cite{bhatia_2007}) If A and A + E are $n \times n$ symmetric matrices, then 
\[|\lambda_k (A + E) - \lambda_k (A) |\leq \vert \vert E \vert \vert_2\]
for $k = 1:n$
\end{theorem}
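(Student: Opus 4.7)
The plan is to derive the bound from the Courant--Fischer min--max characterization of the eigenvalues of a symmetric matrix. Recall that for any symmetric $n\times n$ matrix $M$ with eigenvalues ordered $\lambda_1(M)\ge \lambda_2(M)\ge\cdots\ge\lambda_n(M)$,
$$
\lambda_k(M)=\max_{\substack{S\subseteq\mathbb{R}^n\\ \dim S=k}}\;\min_{\substack{x\in S\\ \|x\|_2=1}} x^{T}Mx.
$$
I would take this formula as the starting point, applied in turn to $M=A$ and to $M=A+E$, and transfer the perturbation through it.

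The bridge between the two applications is the elementary fact that for a symmetric matrix $E$, the spectral norm satisfies $\|E\|_2=\max_{\|x\|_2=1}|x^{T}Ex|$; in particular, for every unit vector $x$,
$$
-\|E\|_2 \;\le\; x^{T}Ex \;\le\; \|E\|_2.
$$
Hence, for any $k$-dimensional subspace $S$ and any unit $x\in S$,
$$
x^{T}Ax-\|E\|_2 \;\le\; x^{T}(A+E)x \;\le\; x^{T}Ax+\|E\|_2.
$$
Taking $\min$ over unit $x\in S$ and then $\max$ over $k$-dimensional $S$ on both outer expressions (the additive constant $\pm\|E\|_2$ passes through both operations untouched) yields
$$
\lambda_k(A)-\|E\|_2 \;\le\; \lambda_k(A+E) \;\le\; \lambda_k(A)+\|E\|_2,
$$
which is exactly $|\lambda_k(A+E)-\lambda_k(A)|\le\|E\|_2$.

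There is not much of a genuine obstacle here beyond bookkeeping: the proof is linear once one commits to min--max. The main points where care is needed are (i) stating the correct convention for the ordering of the eigenvalues so that the same index $k$ is extracted by the same $\max\min$ on both sides, and (ii) justifying that one may substitute the pointwise inequality $x^{T}(A+E)x \lessgtr x^{T}Ax\pm\|E\|_2$ inside both the inner minimization and the outer maximization; this is immediate because monotonicity of $\min$ and $\max$ under pointwise inequalities is preserved when the perturbation is a constant independent of $x$ and $S$. An alternative route would be through Weyl's inequalities derived from a dimension-counting argument on eigenspaces, but the min--max approach is cleaner and avoids any case analysis.
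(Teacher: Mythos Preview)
Your argument via the Courant--Fischer min--max principle is correct and is essentially the standard textbook proof of this Weyl-type inequality. Note, however, that the paper does not supply its own proof of this statement: it is quoted as a known result from \cite{bhatia_2007} and used as a black box to motivate the subsequent discussion of approximating the error function $\zeta_l$. So there is no proof in the paper to compare against; your derivation is the natural one and matches what one finds in the cited reference.
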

The aforementioned result can be employed to approximate the eigenvalue term in the error function ($\zeta_l$) described in Section \ref{errorfunc}. However, we currently lack a bound for the $\zeta_q$ term, and computing the $\ell_2$ norm of the matrix involves finding the largest eigenvalue, which is computationally expensive.
\\
In our endeavour to approximate the eigenvalues and eigenvectors of the transformed Laplacian matrix, we capitalize on the eigenvalues and eigenvectors of the original Laplacian matrix. This approach entails the utilization of the Rayleigh iteration method, which is further elucidated in the subsequent section. Section \ref{eigen_approx} provides an in-depth exploration of our approximation methodology, highlighting its distinctive advantages.\\
\textbf{Rayleigh Iteration:} $A \in \mathbb{R}^{n\times n}$ symmetric \\
$x_0$ given, $\vert \vert x_0 \vert \vert_2 = 1$
\begin{algorithmic}
\FOR{$i=0,1$ to ...}
\STATE $\mu_k = r_A(x_k)$ \, $(r_A(x) = \frac{x^T A x}{x^T x})$ 
\STATE Solve ($A$ - $\mu_k \mathbb{I})z_{k+1} = x_k $ for $z_{k+1}$ 
\STATE $ x_{k+1} = z_{k+1}/\vert\vert z_{k+1} \vert\vert_2 $
\ENDFOR
\end{algorithmic}

\subsubsection{Approximations to eigenpair}
\label{eigen_approx}
The Rayleigh quotient iteration algorithm is a powerful tool for computing eigenvectors and eigenvalues of symmetric or Hermitian matrices. It has the property of cubical convergence, but this only holds when the initial vector is chosen to be sufficiently close to one of the eigenvectors of the matrix being analyzed. In other words, the algorithm will converge very quickly if the initial vector is chosen carefully. 
From the first step in Rayleigh iteration, we get $\mu_0 = \lambda_{i} + \epsilon$, if we consider the $i$-th eigenvector $\phi_{i}$ as the initial point $x_0$. The next step requires solving for $z_1$ in the linear system $(A+E-\mu_0 I) z_1 = x_0$. Conducting the sensitivity of the linear system with a singularity is difficult to determine and is not trivial. The solution to the system $(A - \mu_0 I)z = \phi_{i}$ is given by $-\frac{\phi_{i}}{\epsilon}$. The first iterate using the conjugate gradient method would then be given by $y_{i} = \frac{1}{\epsilon \;}\;\left(\frac{a^T a\;\;E \phi_{i}}{a^T \left(A + E -\lambda_{i \;} I\;-\epsilon I\;\right)a\;\;}-\phi_{i \;} \right)$. We consider this iterate as our eigenvector approximation after normalizing. \textcolor{black}{ The proposed approximations to the eigenpair are given below, where $r_A(x) = \frac{x^T A x}{x^Tx}$ denotes the Rayleigh quotient:
\[(\hat{\lambda}_{i}, \hat{x}_{i}) = \bigg( r_{A+E} (y_{i}), \frac{y_{i}}{\vert\vert y_{i} \vert\vert} \bigg)\]
In the following theorem, we establish bounds on these eigenvalue and eigenvector approximations.}
\begin{theorem}
\label{theorem_estimate}

Let $G$ be an undirected graph with Laplacian matrix $L$ and $L^{'} = L + E$ be the perturbed Laplacian matrix, where $E = B^T D B$ represents the perturbations in the Laplacian matrix. We denote the $i$-th eigenvector estimate as
$\hat{x}_{i} = \frac{y_{i}}{\vert\vert y_{i} \vert \vert_2}$ with $y_{i} = \frac{1}{\epsilon \;}\;\left(\frac{a^T a\;\;E \phi_{i}}{a^T \left(A + E -\lambda_{i \;} I\;-\epsilon I\;\right)a\;\;}-\phi_{i \;} \right)$. Then 
\[\|(L + E - r_{L+E}(y_{i}) I) \hat{x}_{i}\|_2 \leq \tau + \|E\|_2,\]
where $r_A(x) = \frac{x^T A x}{x^T x}$, $\epsilon = \phi_{i}^T E \phi_{i}$, $a = E \phi_{i}$ and $\tau = max(\vert \lambda_n - \lambda_{2}(L+E) \vert, \lambda_n(L+E))$. $\lambda_n \geq \lambda_{n-1} \geq...\geq \lambda_2 \geq \lambda_1$ denotes the eigen values of the Laplacian matrix of graph $G$.
\end{theorem}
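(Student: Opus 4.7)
The plan is to reduce everything to a Rayleigh-quotient residual and then locate $r(y_i)$ inside the spectrum of $L+E$. I would begin with the triangle inequality
\begin{equation*}
\|(L+E-r(y_i)I)\hat{x}_i\|_2 \leq \|(L-r(y_i)I)\hat{x}_i\|_2 + \|E\hat{x}_i\|_2,
\end{equation*}
and, since $\|\hat{x}_i\|_2=1$, bound the second summand by $\|E\|_2$. This already accounts for the $\|E\|_2$ term in the target inequality, so all the remaining work is to prove $\|(L-r(y_i)I)\hat{x}_i\|_2 \leq \tau$.

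Next I would exploit the spectral decomposition of $L$. Expanding $\hat{x}_i = \sum_{j=1}^{n} c_j \phi_j$ with $\sum_j c_j^2 = 1$ gives
\begin{equation*}
\|(L-r(y_i)I)\hat{x}_i\|_2^2 = \sum_{j=1}^{n} c_j^2(\lambda_j - r(y_i))^2 \leq \max_{1\leq j\leq n}(\lambda_j - r(y_i))^2,
\end{equation*}
and since $\lambda_j\in[\lambda_1,\lambda_n] = [0,\lambda_n]$, the maximum is attained at $j=1$ or $j=n$.

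The heart of the argument is to show $r(y_i) \in [\lambda_2(L+E), \lambda_n(L+E)]$. The upper end is immediate from $r$ being a Rayleigh quotient of $L+E$. For the lower end I would argue that $\hat{x}_i$ is orthogonal to the constant vector $\mathbf{1}/\sqrt{n}$. Because the oriented incidence matrix satisfies $B\mathbf{1}=0$, both $L\mathbf{1}=0$ and $E\mathbf{1}=B^T DB\mathbf{1}=0$, so $\mathbf{1}/\sqrt{n}$ is the first eigenvector of $L+E$ with eigenvalue $0$. By symmetry of $E$ this also yields $\mathbf{1}^T E\phi_i = 0$, and for $i\geq 2$ (the only meaningful case, since $\epsilon = \phi_1^T E\phi_1 = 0$ forbids $i=1$) we have $\mathbf{1}^T\phi_i = 0$ as well. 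Since $y_i$ is a scalar combination of $\phi_i$ and $E\phi_i$, it is orthogonal to $\mathbf{1}$, hence so is $\hat{x}_i$. Courant--Fischer then gives $r(y_i) = \hat{x}_i^T(L+E)\hat{x}_i \geq \lambda_2(L+E)$.

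Given this location of $r(y_i)$, the two extreme cases yield $|\lambda_1 - r(y_i)| = r(y_i) \leq \lambda_n(L+E)$, and $|\lambda_n - r(y_i)| \leq \max(\lambda_n - \lambda_2(L+E),\, \lambda_n(L+E) - \lambda_n) \leq \tau$, so $\|(L-r(y_i)I)\hat{x}_i\|_2\leq\tau$ and the claim follows by combining with the triangle inequality. The main obstacle is the orthogonality argument establishing $r(y_i)\geq\lambda_2(L+E)$: it rests on the combinatorial fact that $\mathrm{span}(\mathbf{1})\subseteq\ker(B)$ is preserved under perturbations of the form $B^T DB$. Without this observation one can only place $r(y_i)\in[0,\lambda_n(L+E)]$, which weakens the first-term bound from $\tau$ to a quantity of order $\lambda_n(L+E)+\lambda_n$ and loses the tighter dependence on the spectral gap $\lambda_n - \lambda_2(L+E)$.
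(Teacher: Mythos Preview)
Your proposal is correct and follows essentially the same route as the paper: split off $\|E\|_2$ via the triangle inequality, reduce the remaining term to $\max_j|\lambda_j - r(y_i)|$ through the spectral decomposition of $L$, and then pin down $r(y_i)\in[\lambda_2(L+E),\lambda_n(L+E)]$ using the observation $\mathbf{1}^T y_i = 0$ together with Courant--Fischer. The only cosmetic difference is that the paper applies the triangle inequality at the operator-norm level ($\|(L-r(y_i)I)+E\|\leq\|L-r(y_i)I\|+\|E\|$) rather than at the vector level, and it states $\mathbf{1}^T y_i=0$ without the explicit justification via $B\mathbf{1}=0$ that you supply; your version is in fact the more fully argued of the two.
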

\begin{proof}

$\vert \vert ((L+E) - \frac{y_{i}^T (L+E)y_{i}}{y_{i}^T y_{i}}I ) \hat{x}_{i} \vert \vert = \vert \vert (\Phi \Lambda \Phi^T - \frac{y_{i}^T (L+E) y_{i}}{y_{i}^T y_{i}}I + E) \hat{x}_{i} \vert \vert $ (since Laplacian matrix is diagonalizable and $L = \Phi \Lambda \Phi^T$).
\\
\[\vert \vert (\Phi \Lambda \Phi^T - \frac{y_{i}^T (L+E) y_{i}}{y_{i}^T y_{i}}I + E) \hat{x}_{i} \vert \vert \leq  \]
\begin{align*}
\vert \vert \Phi (\Lambda - \frac{y_{i}^T (L+E) y_{i}}{y_{i}^T y_{i}}I) \Phi^T + E \vert \vert \,\,  \vert \vert \hat{x}_{i} \vert \vert = \vert \vert \Phi (\Lambda -  \frac{y_{i}^T (L+E) y_{i}}{y_{i}^T y_{i}}I) \Phi^T + E \vert \vert    
\end{align*}

Using the triangular inequality of matrix norms, the above inequality becomes 
\[ \leq 
\vert \vert \Phi (\Lambda - \frac{y_{i}^T (L+E) y_{i}}{y_{i}^T y_{i}}I) \Phi^T \vert \vert + \vert \vert E \vert 
\vert \leq  \tau  + \vert \vert E \vert \vert
\]
It can be observed that $\mathbf{1}^T y_{i} = 0$ and from the definition, we have $\lambda_k(L) = \mathrm{inf_{x \perp P_{k-1}} \,}\frac{x^T L x }{x^T x}$ where $P_{k-1}$ is the subspace generated by eigenvectors of the Laplacian matrix $L$ given by \\ $\mathrm{span(}v_1,v_2,...,v_{k-1}).$ Thus the Fielder value or the second smallest eigenvalue $\lambda_2$ of the Laplacian matrix is given by $\lambda_2 = \mathrm{inf_{x \perp \mathrm{1}}} \frac{x^T L x}{x^T x}$ and the largest eigenvalue $\lambda_n = \text{sup}_x \frac{x^T L x}{x^T x}$. Thus the above inequality becomes 
\[ \leq \tau  + \vert \vert E \vert \vert \mathrm{\,\,,where \,\, \tau = max(\vert \lambda_n - \lambda_{2}(\textit{L+E})\vert, \lambda_n(\textit{L+E}))}
\]
\end{proof}
The error function using the estimate would then be 
\[
\boxed{\bar{\zeta} = n \frac{\sum_{i=1}^{n_p} \vert \hat{\lambda}_{i} - \lambda_{i} \vert^2}{n_p \;\; \sum_{i = 1}^{n_p} (\lambda_{i})^2} + \frac{\sum_{i=1}^{n_p} \vert \hat{x}^T_{i} \phi_{i} - 1 \vert^2 }{n_p}}
\label{estimate_eval}
\]
From (\cite{reddy_trefethen_1990}), the following \textbf{definition} apply to pseudo spectra of matrices. \\
If $A$ is an $N \times N$ matrix and for $\lambda$ to be an $\epsilon-$pseudo-eigenvalue of A is 
\[ \sigma_{N}(\lambda I - A) \leq \epsilon\] where $\sigma_N$ denotes the smallest singular value. From the above definition, we can see that our estimate $\hat{\lambda}_{i}$ is a $\tau + \vert\vert E \vert\vert_2$-\textbf{pseudo eigenvalue} of \textit{L}. 
\\
We modify the constraint $\zeta \leq \beta_1$ by $\bar{\zeta} \leq k_1 \beta_1$, where $k_1$ and $\beta_1$ are user defined constants. \\
We present numerical experiments on certain graphs, and for every graph, we take the number of eigenmodes to preserve as the largest eigenmodes of the Laplacian matrix of the graph. Random perturbation is given to every edge of the graph, and a comparison is between $\zeta$ and $\bar{\zeta}$ (Table \ref{tab:my_label}).
\par Figures showing the effectiveness of the approximate eigenpair in preserving the eigenvalues and eigenvectors are described by Figures in Section \ref{image-eval1}. {Theorem \ref{Theo2}, which is a derived result, will play a crucial role in the subsequent discussion on the discontinuity of the function $\bar{\zeta}$ (refer to Eq. \ref{b}).} 
\newline
\begin{theorem}
Given an undirected graph $G = (V,E,W)$, with the Laplacian matrix $L$, then the matrix $M = L - (pI)$ is not orthogonal $\forall \,p \in \mathbb{R}$ if the minimum degree of the graph $d_{\mathrm{min}}(G) > 2.$
\label{Theo2}
\end{theorem}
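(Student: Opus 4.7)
My plan is to argue by contradiction: suppose $p \in \mathbb{R}$ exists for which $M := L - pI$ is orthogonal, and derive that $d_{\min}(G) \leq 2$. Since $L$ is symmetric so is $M$, and the orthogonality condition $M^T M = I$ collapses to the involution $M^2 = I$. Equivalently, every eigenvalue of $M$ lies in $\{-1,+1\}$, so every eigenvalue of $L = M + pI$ lies in $\{p-1,\,p+1\}$.

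I would then pin down $p$ using two standard facts about the graph Laplacian: $L$ is positive semidefinite, and $\mathbf{1}$ is an eigenvector with eigenvalue $0$. The second fact places $0 \in \{p-1,\,p+1\}$, so $p \in \{-1,1\}$. The case $p = -1$ forces the spectrum of $L$ into $\{-2,0\}$, which combined with positive semidefiniteness forces $L = 0$ and hence every degree to vanish, contradicting the hypothesis. Thus $p = 1$, the spectrum of $L$ is contained in $\{0,2\}$, and $L$ satisfies the polynomial identity $L(L - 2I) = 0$, i.e.\ $L^2 = 2L$.

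The final step is to compare the $i$-th diagonal entries. Using $L_{ii} = d_i$ and $L_{ij} = -w(i,j)$ whenever $j \sim i$ (zero otherwise), I get $(L^2)_{ii} = \sum_k L_{ik}^2 = d_i^2 + \sum_{k \sim i} w(i,k)^2$, while $(2L)_{ii} = 2 d_i$. The identity $L^2 = 2L$ then gives $d_i^2 + \sum_{k \sim i} w(i,k)^2 = 2 d_i$, and since the weighted sum is nonnegative this implies $d_i^2 \leq 2 d_i$, hence $d_i \leq 2$ for every vertex $i$, contradicting $d_{\min}(G) > 2$.

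I do not anticipate a substantial obstacle. The conceptual step that does the work is the collapse of symmetric-plus-orthogonal to $M^2 = I$, after which the spectrum of $L$ is essentially pinned down; the rest is bookkeeping on the diagonal of $L^2$. The only mild care point is the alternative $p = -1$, which is dispatched cleanly by positive semidefiniteness of $L$ (or, equivalently, by noting that $\mathrm{tr}(L) = \sum_i d_i > 0$ whereas a spectrum in $\{-2,0\}$ would make the trace nonpositive).
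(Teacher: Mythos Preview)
Your proof is correct and shares the same opening with the paper: symmetry plus orthogonality collapses to $M^2 = I$, which pins the spectrum of $L$ to $\{p-1,\,p+1\}$; the zero eigenvalue then forces $p \in \{-1,1\}$. The divergence is only in the closing move. You split cases, using positive semidefiniteness (or the trace) to kill $p=-1$, and for $p=1$ you pass to the matrix identity $L^2 = 2L$ and read off $d_i \le 2$ from its diagonal. The paper instead finishes both cases in a single line with the Rayleigh-quotient bound $\lambda_n \ge e_i^T L e_i = d_i \ge d_{\min} > 2$, which immediately contradicts the spectrum lying in $\{0,2\}$ (or $\{-2,0\}$). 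Your diagonal computation is a nice concrete alternative, but the paper's one inequality is shorter and sidesteps the case split entirely.
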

\begin{proof}
We prove the above theorem by contradiction. For a matrix $M$ to be orthogonal, we require $M^T M = I$. The eigenvalue decomposition of the matrix $L = \Phi \Lambda \Phi^T$. Applying the definition of the orthogonality of matrices to the matrix $M$, we get $\Phi(\Lambda - pI)^2 \Phi^T = I$. Using the property of similarity of matrices, we can observe that all the eigenvalues of $\Phi(\Lambda - pI)^2 \Phi^T$ should be 1. From the zero eigenvalue of the matrix $L$, $p = \pm 1$. The remaining eigenvalues of the Laplacian matrix should be either 0 or 2 according to this $p$ value, but since the minimum degree of the graph is greater than 2, this cannot be the case as the largest eigenvalue of the Laplacian matrix of the graph will be greater than or equal to the minimum degree ($\lambda_n \geq d_{\text{min}} > 2,\; \lambda_n = \text{sup}_x \frac{x^T L x}{x^T x}$, taking $x = e_1$, where $e_i$ denotes the canonical vector with a 1 in the $i$th coordinate and $0'$s elsewhere).    
\end{proof}
\subsubsection*{Points of discontinuity for the function $\bar{\zeta}$}
Discontinuities in $\bar{\zeta}$ can happen due to the following cases, \\
\textbf{Case 1.}\, \[\phi_{i}^T E \phi_{i} = 0. \tag{d.(a)} \label{a}\] 
\\
\textbf{Case 2.}\, \[a^T(L + E - \lambda_{i}I - \epsilon I) a = 0, \text{\;where\;} a = E \phi_{i} \neq \mathbf{0}_{n\times1}. \tag{d.(b)} \label{b} \]  \\
Considering the first case, one possibility is when $ E $ is a rotation matrix. However matrix $E$ is not orthogonal $(\text{det}(E) = 0)$. The other possibility is when $\phi_{i}\, \in$ nullspace($E$). If this happens, then we could see that $(\lambda_{i},\phi_{i})$ is an eigenvalue, eigenvector pair for the new Laplacian matrix. Possible candidates of $\gamma$ satisfying this condition are given by the solution to the under-determined linear system under non-negativity constraints.

\[
 R \gamma = \lambda_{i} \phi_{i}, \, \mathrm{where\,\,} R = B^T W^{1/2} \text{$\text{diag}(W^{1/2}B\phi_{i})$}
\]
\[
\gamma \geq 0
\]
\\
Case 2 occurs when $ a \neq \mathbf{0}_{n\times1}\,\text{and}\, a^T ({L + E} - (\lambda_{i} + \phi_{i}^T E \phi_{i})I) a = 0.$ One possibility is that  $M_{i} = ({L + E} - (\lambda_{i} + \phi_{i}^T E \phi_{i})I)$ needs to be a rotation matrix, but this case cannot happen if we impose a minimum degree constraint as mentioned in Theorem \ref{Theo2}.
Elements in the nullspace of matrix $M_{i}$ under non-negativity constraints are candidates for discontinuity of the function $\bar{\zeta}$. \textcolor{black}{We enhance the eigenpair estimates of Theorem~\ref{theorem_estimate} by considering the impact of the following discontinuities.} \\
\[
\label{errorfunctionestimate}
(\tilde{\lambda}_{i}, \tilde{\phi}_{i}) = 
\begin{cases}
  (\lambda_{i}, \phi_{i}), \hspace{30mm} \text{if } E \phi_{i} = \mathbf{0}_{n \times 1}  \\ 
  (\frac{a^T M_{i} a}{a^T a}, \phi_{i}), \hspace{24mm}\text{if } y_{i} = \mathbf{0}_{n \times 1} \text{\;and\;} \\ \hspace{52mm}E \phi_{i} \neq \mathbf{0}_{n \times 1}\\
  (\lambda_{i} + \phi_{i}^T E \phi_{i}, \frac{a}{\vert\vert a \vert \vert}), \hspace{15mm} \text{if } M_{i} a = \mathbf{0}_{n\times 1}, \\ \hspace{45mm}y_{i}, E \phi_{i} \neq \mathbf{0}_{n\times 1} \\ \hspace{45mm}\text{\; and \;} \vert\vert y_{i} \vert\vert < \infty \\
  (\hat{\lambda}_{i}, \hat{x}_{i}), \hspace{30mm}  \text{otherwise} 
\end{cases}
\]
The error function is now modified as 
\begin{equation}
  \bar{\zeta}(\bar{\gamma}) = \frac{n}{n_p \sum_{i = 1}^{n_p} (\lambda_{i})^2} \sum_{i=1}^{n_p} (\tilde{\lambda}_{i} - \lambda_{i})^2 + \frac{1}{n_p} \sum_{i=1}^{n_p} (\tilde{\phi_{i}}^T \phi_{i} - 1)^2 .
  \label{zetatilde}
\end{equation}
The summation terms could be computed in parallel using multiple cores of the machine for large graphs. The speedup obtained when evaluating this function in parallel is shown in Table \ref{tab:my_label} and Figure \ref{speedupfigure}.


\textcolor{black}{
Table \ref{tab:my_label} demonstrates the computational efficiency and accuracy of the proposed eigenpair approximation method across various real-world graph datasets. The relative error between the exact error function and the approximated version remains consistently low across all tested graphs, ranging from 0.0013 for the complex graph to 0.0362 for the 192bit graph, validating the accuracy of our approximation scheme. The speedup columns compare the computational performance of our proposed method against traditional eigenvalue-eigenvector computation approaches. While smaller graphs (130bit, 145bit) show modest serial speedups of 2-3×, larger graphs demonstrate more substantial performance gains, with the Ca-HepTh dataset achieving an 83× speedup in serial execution. The parallel implementation further amplifies these benefits, particularly for larger graphs where the Ca-HepTh and Ca-HepPh datasets achieve speedups of 184× and 77× respectively. This performance scaling highlights the method's effectiveness for large-scale graph analysis, where the combination of approximation accuracy and parallel computational efficiency makes it particularly suitable for applications requiring rapid spectral analysis of perturbed graph Laplacians.
}

\subsubsection{Gradient of error function $\bar{\zeta}$}
\label{nonlinear_constr_grad}
\begin{align*}
\frac{d \bar{\zeta}}{d \gamma} = \frac{2n}{n_p \sum_{i = 1}^{n_p} (\lambda^{i})^2} \sum_{i = 1}^{n_p}  (\tilde{\lambda}_{i} - \lambda_{i}) \frac{d \tilde{\lambda}_{i}}{d \bar{\gamma}} + \frac{2}{n_p} \sum_{i = 1}^{n_p} (\tilde{\phi}^T_{i}, \phi_{i} - 1) \hspace{2mm} (D_{\bar{\gamma}} \tilde{\phi}_{i})^T \phi_{i}   
\end{align*}

\[
\hat{\lambda}_{i} = \frac{y(\bar{\gamma})^T \Tilde{L} y(\bar{\gamma})}{y(\bar{\gamma})^T y(\bar{\gamma})}
\]
\[
\frac{d \hat{\lambda}_{i}}{d \bar{\gamma}} = \frac{y(\bar{\gamma})^T y(\bar{\gamma}) \hspace{2mm} \nabla{y(\bar{\gamma})^T \Tilde{L} y(\bar{\gamma})} - (y(\bar{\gamma})^T \Tilde{L} y(\bar{\gamma})) \; \nabla{y(\bar{\gamma})^T y(\bar{\gamma})}}{(y(\bar{\gamma})^T y(\bar{\gamma}))^2}
\]

\[
\nabla_{\bar{\gamma}}{y(\bar{\gamma})^T \Tilde{L} y(\bar{\gamma})} = 2 D_{\bar{\gamma}} (B_1(\bar{\gamma}))^T B_1(\bar{\gamma}) 
\]
\[
B_1(\bar{\gamma}) = \bar{\gamma}^{1/2} W^{1/2} B y(\bar{\gamma})
\]
\[
\nabla{y(\bar{\gamma})^T y(\bar{\gamma})} = 2 (D_{\bar{\gamma}}y(\bar{\gamma}))^T y(\bar{\gamma})
\]
\[
\hat{x}_{i} = \frac{y(\bar{\gamma})}{\vert\vert y(\bar{\gamma})\vert\vert}, \;\nabla \left( \frac{y_j(\bar{\gamma})}{\vert\vert y \vert\vert}\right) = \left(\frac{\nabla y_j(\bar{\gamma})}{\vert\vert y \vert\vert}  - \frac{y_j(\bar{\gamma})\,D_{\bar{\gamma}}y(\bar{\gamma})^Ty(\bar{\gamma})}{\vert\vert y \vert\vert^3 }\right)
\]
\begin{table}[]
    \centering
    \begin{tabular}{ 
|p{2cm}|p{1cm}|p{0.8cm}|p{1.2cm}|p{1.5cm}|p{1.5cm}|p{1.4cm}|  }
\hline
\textbf{Graph} & n & $n_p$ & m & Relative error $\frac{(\zeta - \bar{\zeta})}{\zeta}$& speedup \;\; (serial) & speedup \;\;(parallel) \\
\hline
complex  & 1408 & 282 & 46326 & 0.0013
  & 1.5532 & 0.9061 
\\
\hline
145bit  & 1002 & 201 & 11251 & 0.0122 &  3.2414 & 0.4946 \\
\hline
192bit  & 2600 & 520 & 35880 & 0.0362 & 13.7567 & 11.9106 \\
\hline
130bit & 584 & 117 & 6058 & 0.0088 & 2.9379 & 0.1888 \\
\hline
Ca-HepPh & 11204 & 2241 & 117619 & 0.0079
 & 27.4243 & 77.1951\\
 \hline
 Ca-HepTh & 8638 & 1728 & 24806 & 0.0341 & 83.6472 & 184.8210 \\
\hline
\end{tabular}
    \caption{Showing comparison of $\zeta$ and $\bar{\zeta}$ for real world graphs with random perturbation to the edges.}
    \label{tab:my_label}
\end{table}
\label{estimatetable}

\section{Optimization problem}
\label{costfunc}
The sparsification framework we propose involves solving a constrained optimization problem, and the underlying constrained optimization problem is discussed in this section. In the optimization objective function (Eq. \ref{costfuncadjoint}), we strive to promote sparsity in the multipliers by incorporating a term that involves the $\ell_1$ norm. Optimization using $\ell_1$ term generally induces sparsity. The term ($T_1$) in the objective function seeks to minimize the disparity between the observations at different time points, projected onto a reduced dimensional space derived from various trajectories, and the states obtained at these times using the reduced order model with the weights $\bar{w} = \text{diag($w$)} \bar{\bar{\gamma}}$. Here $F^c(t_j)$ represents the observation from trajectory $c$ at time $t_j$. The initial conditions used to get observations $F^c(\cdot)$ will be kept invariant (Eq. \ref{9.c2}). $Q^{'} = Q \,\text{diag($w$)}$, where $Q$ represents the unsigned incidence matrix of the graph (Figure \ref{example_connect}), matrix $Q^{'}$ is used to impose the connectivity constraint (Eq. \ref{9.c3}). $A^c$ contains indices of observations assimilation from trajectory c. $\bar{F}^c(\cdot)$ follows the forward model $M(\cdot)$, the discretization for the reduced order model (Eq. \ref{9.c1}). Constraint $\bar{\zeta} \leq \beta_1$ tries to minimize perturbations to the largest $n_p$ eigenmodes of the Laplacian matrix (Eq. \ref{C}).  
\begin{equation}
\\
\\
\textbf{minimize} \,\, J(\bar{\gamma}) =  \underbrace{\frac{\sum_{c =1}^{\omega}
      \sum_{j \in A^c} \vert\vert(F^{c}(t_j )  -\bar{F}^{c}(t_j) \;)\vert\vert^2 }{2}}_{T_1}  + \color{blue}{ {\frac{\alpha}{2} \sum_{i=1}^m \vert \bar{\gamma}_i \vert}}     
\label{costfuncadjoint}    
\end{equation}
\begin{alignat*}{2}
\textbf{subject\,\, to} \\
& \tag{13.c1} \label{9.c1} \bar{{F}}^{c}_{q+1} = M(\bar{{F}}^{c}_{q},\bar{\gamma})\quad &&c = 1,\dots,\omega,\;\;q = 0,1,\dots,T-1,\\ &\bar{F}^c_0 = F^c_0  \tag{13.c2} \label{9.c2}\\
&Q^{'} \bar{\gamma} \geq \tau \quad && \tag{13.c3} \label{9.c3}\\
&\bar{\gamma}_i \geq 0, &&i = 1,2,\dots,m \tag{13.c4} \label{9.c4}\\
&\bar{\zeta}(\bar{\gamma}) \leq \beta_1 &&
\tag{13.c5} \label{C}
\end{alignat*}
 \par
 The first variation of $J$ is given by \[
 \delta J  = \sum_{c= 1}^{\omega} \sum_{j \in A_{c}} \left( (V_j^c)^T \eta_j^c\right)^T \delta\bar{\gamma} + \textcolor{black}{ \frac{\alpha_1}{2} \left(\frac{\vert \bar{\gamma}_1 \vert}{\bar{\gamma}_1},..., \frac{\vert \bar{\gamma}_i \vert}{\bar{\gamma}_i},..,\frac{\vert \bar{\gamma}_m \vert}{\bar{\gamma}_m}\right)^T \delta \bar{\gamma}}\,  
 \text{, see Equation (\ref{first_var})}
 \]
 
\begin{figure*}
    \centering
    \includegraphics[height = 0.6\textwidth, width=\textwidth]{./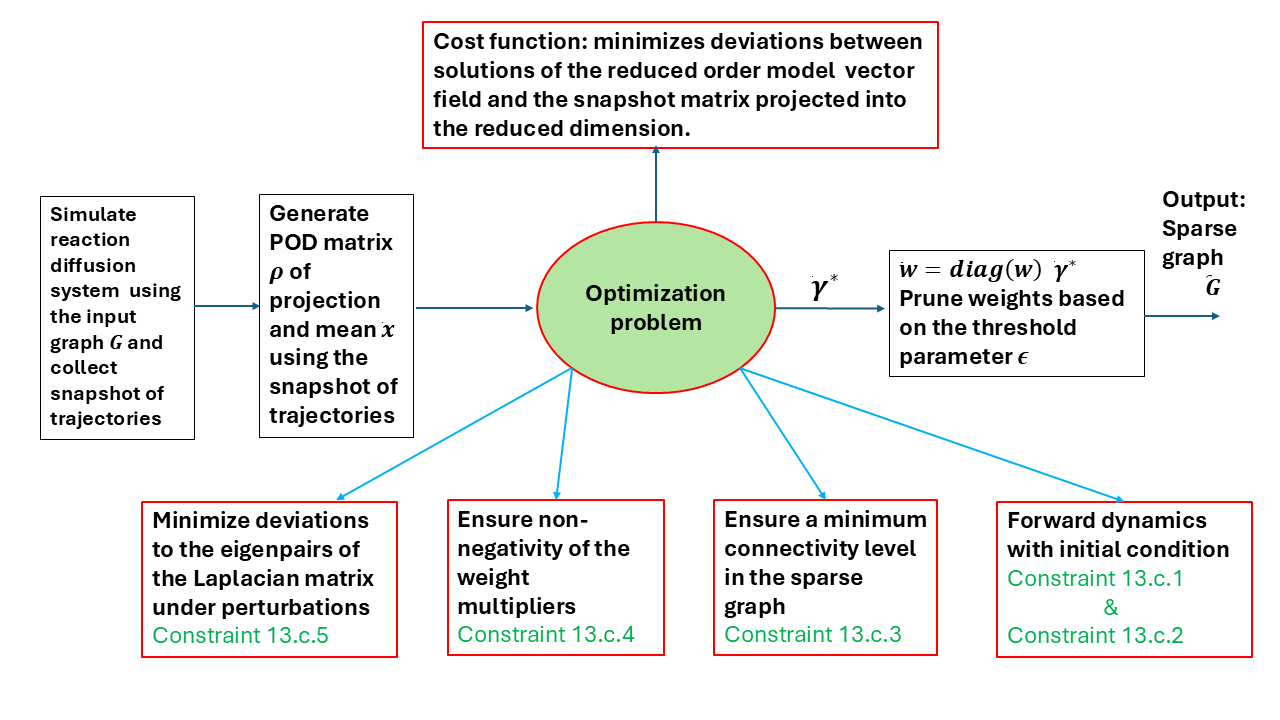}
    \caption{Schematic layout of the \texttt{SGRDN} algorithm.}
    \label{datsapproach}
\end{figure*}
The primary focus of the optimization routine is to minimize the function $J$ (Eq. (\ref{costfuncadjoint})) subject to constraints. The gradient of $J$ is obtained using the adjoint method algorithm as mentioned in (Section \ref{Adjointalgo}). The model $M(\cdot)$ used for forward dynamics represents the discretization for the ROM to reduce computational complexity. \textcolor{black}{The pruning of weights mentioned in the final step of Figure~\ref{datsapproach} is necessary because interior point methods, as discussed in \cite{nocedalbook}, typically converge to solutions in the vicinity of the optimum rather than achieving exact sparsity. The threshold-based pruning ensures that the final solution exhibits the desired sparse structure by setting near-zero weights to exactly zero.}
\par If we consider the optimization problem without the non-linear constraint, we can observe that the search space for our objective function described in (Eq. \ref{costfuncadjoint}) can be made closed by using the substitution ($\bar{\gamma} = \gamma^+ - \gamma^-$, $\vert\vert \bar{\gamma} \vert\vert_1 = \mathbf{1}^T \gamma^+  + \mathbf{1}^T \gamma^-$), where $\gamma^+$ represents the positive entries in $\gamma$, 0 elsewhere and $\gamma^-$ represents the negative entries in $\gamma$ and 0 elsewhere. The objective function and the constraints are continuous after this substitution. Hence, we encounter an optimization problem outlined as follows:
\begin{align*}
&\min_{x \in X}  f(x)\
&\text{subject to\;} h_j(x) \leq 0,\;\; j = 1,2,...n.
\end{align*}
Set $X$ is closed where functions $f$ and $h_j$ are continuous. We could use the Barrier methods for optimization as described in (\cite{nocedalbook},\cite{luenberger2021}). The feasibility set $S = X \cap \{x \,|\, h_j(x) \leq 0 \}$. If we generate a decreasing sequence $\epsilon_k$, let us denote $x_k$ as a solution to the problem $\mathrm{min_{x \in S}} \,\,f(x) + \epsilon_k B(x)$. 
Then every limit point of $\{x_k\}$ is a solution to the problem. One selection of barrier function $B(x) = -\sum_{j=1}^n \log(-h_j(x))$(Barrier convergence theorem). However, because of the discontinuities in the non-linear constraint, convergence in the constrained optimization problem may not be obtained. Heuristic algorithms for global optimization described in \cite{nocedalbook}, like Genetic algorithms, Pattern search algorithms exist however require multiple function evaluations.
\nolinenumbers
  Since the computation of the objective function in the optimization problem (Section \ref{costfunc}) is computationally expensive, we do not prefer such methods. 
  \\
\fbox{
\begin{minipage}{0.97\textwidth}
\textbf{SGRDN ALGORITHM} \\
\textbf{INPUT:} 
\begin{align*}
   &   \text{An undirected Graph\;}  G = (V,E,w) \\ 
                            & \alpha_1 - \; \ell_1 \; \text{regularization parameter} \\
                            & \tau_{\vert V \vert \times 1} - \text{connectivity parameter} \\
                            & \epsilon - \text{edge pruning parameter} \\
                            & N - \text{\;number of trajectories taken for POD step} \\
                            & \omega - \text{number of trajectories taken for assimilation} \\
                            & p_1, p_2,\dotsc,p_{\omega} - p_i \text{\;represents the number of points taken from trajectory $i$\;} \text{for assimilation}       \\
                            & k -\text{\;reduced dimension} \\
                            & n_p - \text{number of eigenmodes taken for the non-linear constraint} \\
                            &\beta_1 - \text{non-linear constraint parameter} \\
                            & F^1_0, F^2_0,....., F^N_0 - \text{initial conditions for $N$ trajectories} 
\end{align*}
\hrule height 1pt
\begin{algorithmic}[] 
\STATE 
\textbf{OUTPUT:} Sparse graph $\Tilde{G} = (V,\Tilde{E},\Tilde{w})$
\STATE \hrule height 1pt
\STATE \textbf{Procedure:}
\STATE 1. Run the forward reaction-diffusion model as in (Eq. \ref{Rd dynamics}) with initial conditions $F^1_0, F^2_0,....., F^N_0$ and sample $p_i$ number of data points in the interval $[0, T]$, $i = 1,2,\dotsc, \omega$ for assimilation, Sample data points at regular intervals for each trajectory thus obtaining the snapshot matrix containing $N$ trajectories. 
\STATE
\STATE 2. From the snapshot matrix, obtain the matrix $\rho$ of projection and mean $\bar{x}$ as mentioned in (Section \ref{ROMappro}), selecting the $k$ largest eigenvectors of the covariance matrix.
\STATE
\STATE 3. From the $N$ trajectories, select $\omega \leq N$ trajectories for the data assimilation correction part.
\STATE 4. Using the sampled trajectories, run the minimization procedure on the optimization problem discussed in (Section \ref{costfunc}) to obtain the weight multipliers $\bar{\gamma}$.
\STATE 5. On obtaining the multipliers, construct the weights $\bar{w}$ using the multiplier $\bar{\gamma}$, $\bar{w} = \text{diag($w$)} \bar{\gamma}$. Remove weights less than $\epsilon$ to obtain the new weights $w^{'}$. 
\STATE 6. Construct $\Tilde{G}$ using the weights $w^{'}$.
\end{algorithmic}
\textbf{END ALGORITHM}
 \end{minipage}
}
\begin{figure}
\centering
\includegraphics[height = 0.5\linewidth, width=0.7\linewidth]{./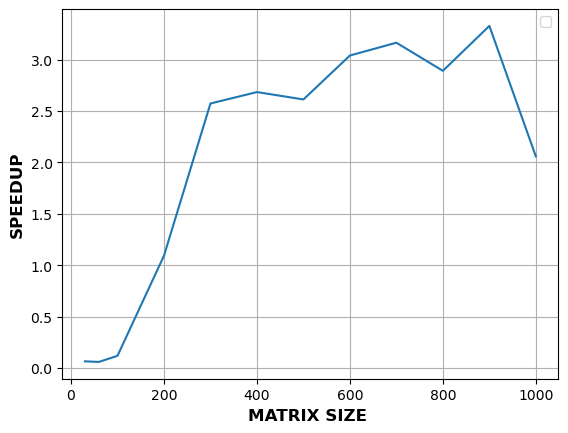}
\caption{Speedup obtained for non-linear constraint evaluation when using the approximations proposed in Section \ref{eigen_approx}.}
\label{speedupfigure}
\end{figure}
\subsection*{Running time of the procedure}
\label{timecomplexity}
\textcolor{black}{The time-consuming step in our sparsification procedure is solving a constrained non-linear programming problem (step 4 in the algorithm). For every iteration of the optimization procedure, we need to compute the projected vector field (Eq.~\ref{fa}), which takes $O(k|V|^2)$ operations (product of matrices $\rho$ and $L$). Evaluation of the gradient takes $O(k\vert E \vert)$ operations (Figure~\ref{costfunccomplexity}) from computation of the first term in Equation~\ref{first_var}.
The nonlinear constraint evaluation and the gradient evaluation take $O(|V|^2 n_p)$ ($n_p$ is the number of inner products between $\tilde{\phi}_i$ and $\phi_i$ in Eq.~\ref{zetatilde}) and $O(|E|^2 n_p)$ operations ($\nabla y(\bar{\gamma})^T \nabla y(\bar{\gamma})$ in Section~\ref{nonlinear_constr_grad}) as shown in Figure~\ref{nonlinearcomplexity}, respectively. Article~\cite{nocedal} describes the challenges involved in solving a nonlinear constrained optimization problem. We exploit parallelism in the nonlinear constraint evaluation steps, and the resulting speedups are shown in Figure~\ref{speedupfigure}.}

Figure~\ref{speedupfigure} illustrates the performance gains achieved when parallelizing the nonlinear constraint evaluation for different matrix sizes. The results show that speedups increase rapidly with matrix size, reaching a factor of over $3\times$ for larger matrices before slightly fluctuating at very high sizes. This demonstrates the effectiveness of our parallel implementation, especially for large-scale graphs where constraint evaluation dominates the runtime.


\begin{figure}[H]
\begin{subfigure}{0.5\textwidth}
\includegraphics[width=0.9\linewidth, height=5cm]{./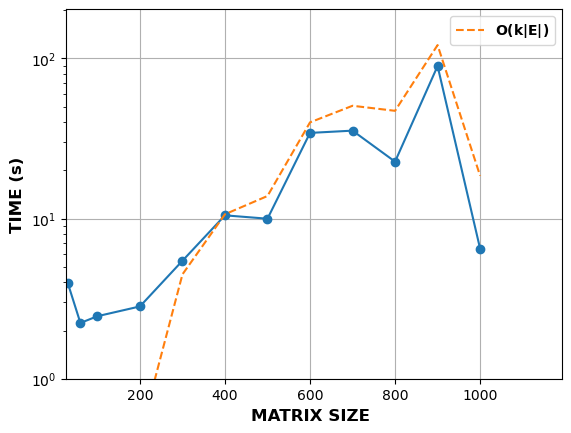} 
\caption{Figure showing the time complexity of cost function.}
\label{costfunccomplexity}
\end{subfigure}
\begin{subfigure}{0.5\textwidth}
\includegraphics[width=0.9\linewidth, height=5cm]{./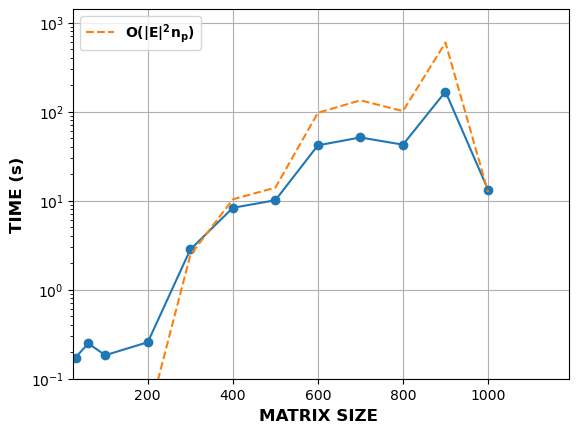}
\caption{Figure showing time complexity of the non-linear constraint.}
\label{nonlinearcomplexity}
\end{subfigure}

\caption{Time complexity of \texttt{SGRDN}.}
\end{figure}

\section{Results}
\label{results}
\texttt{SGRDN} has been applied to a series of graphs taken from actual case studies (\cite{nr-aaai15},\cite{snapnets}). The graphs under consideration are undirected, and preliminary processing steps are undertaken to eliminate self-loops. The parameters within the algorithm described in Section \ref{costfunc} were assigned as follows and were found after extensive parameter tuning and optimization; $\alpha_1 = 16 \log(\vert E \vert)$, $p_1, p_2 $ = 50, $D_x = 0.1$, $D_y = 1.4$, $N = 2$, $\omega = 2$, $\beta_1 = 10^{-2}$, $\tau = \text{max(2.1,\,}0.1\,d_{\mathrm{min}})\mathbf{1}_{n\times 1}\, \, $,\,$ \epsilon = 10^{-2}$, $k = \mathrm{min(}\lceil\frac{n}{5}\rceil,50)$ and $n_p = \lceil n / 5\rceil$, where $d_{\mathrm{min}}$ denotes the minimum degree of the graph. We include the $n_p$ largest eigenvalues and eigenvectors of the Laplacian matrix in the non-linear constraint evaluation step. The Euler forward method is used to find the solution at discrete time steps in the adjoint sensitivity step, and the explicit Range-Kutta 4 slope method is used to generate snapshot matrices required for the POD step. The tabulated results on real-world graphs are mentioned below (Table \ref{Allresults}). The constrained optimization step is solved in MATLAB \cite{MATLAB}, and we are using the interior point method for constrained optimization as described in \cite{nocedal}. \textcolor{black}{To quantitatively assess the preservation of dynamical behavior, we employ the correlation coefficient $R$ as our primary evaluation metric. This coefficient is computed by comparing the spatiotemporal solution profiles of the reaction-diffusion system (Eq. \ref{Rd dynamics}) between the original and sparsified graphs, where both networks are subjected to identical random perturbations from their respective equilibrium states. The correlation analysis captures the degree to which the sparsified graph maintains the essential dynamical characteristics of the original network under perturbative conditions.} \\ 
\textcolor{black}{Table \ref{Allresults} presents the performance of the \texttt{SGRDN} algorithm on five real-world network datasets, demonstrating its effectiveness in generating sparse graph representations while preserving dynamical behavior. The sparsification results vary significantly across different graph topologies: the algorithm achieves substantial edge reduction for graphs such as 145bit (from 11,251 to 7,767 edges, representing a 31\% reduction), 130bit (from 6,058 to 4,350 edges, representing a 28\% reduction) and approximately 8\% percent reduction for the ca-HepTh graph. The algorithm was unable to find a sparse set of multipiers for the 192bit graph, however the process lead to a non-sparse reweighting which resulted in a change of the eigenvalue and eigenvector spectrum as shown in Figure~\text{9.4(a), (b), (c)}. The correlation coefficients $R$, computed by comparing reaction-diffusion dynamics solutions between original and sparsified graphs under random perturbations, consistently exceed 0.85 across all datasets. This high correlation validates that our spectral-preserving sparsification approach successfully maintains the essential dynamical characteristics of the original networks. The varying degrees of sparsification across different graph types highlight the algorithm's adaptive nature, where the optimization process naturally determines the minimal edge set required to preserve the dominant eigenmodes within the specified tolerance bounds, rather than applying uniform compression ratios regardless of network structure.}

\textcolor{black}{
To validate the spectral preservation quality of our \texttt{SGRDN} algorithm, we conducted comprehensive ablation studies examining the eigenvalue and eigenvector characteristics of both original and sparsified graphs. Figure 9 presents the spectral analysis results across five real-world datasets, where each row corresponds to a different graph (Complex, 130bit, 145bit, 192bit, and Ca-HepTh). The leftmost panels (Figure 9.(1-5)(a)) display the relative eigenvalue changes $\frac{\lambda_i - \hat{\lambda}_i}{\lambda_i}$ between the original Laplacian matrix and its sparsified counterpart, demonstrating that our method successfully preserves the dominant eigenvalues with minimal relative error across all eigenmodes. The middle panels (Figure 9.(1-5)(b)) show the first two coordinates of all eigenvectors from the original graph's Laplacian matrix projected onto a 2D plane, revealing the structural complexity and distribution patterns of the spectral embedding space. The rightmost panels (Figure 9.(1-5)(c)) present the corresponding eigenvector coordinates for the sparsified graphs, where the visual similarity to the original eigenvector patterns confirms that our algorithm maintains the essential geometric structure of the spectral embedding. The radial symmetry and clustering patterns observed in both original and sparsified eigenvector plots validate that the proposed sparsification approach preserves not only individual eigenvalues but also the higher-order spectral relationships that are crucial for maintaining dynamical behavior on the reduced graph structures.
}

\textcolor{black}{We present an illustrative example of our framework's performance on a 50-node Erd\H{o}s-R$\acute{e}$nyi random graph, as depicted in Figure \ref{randomnode}. Figure 10(a) shows the original dense random graph structure with its characteristic high connectivity and complex edge patterns typical of Erd\H{o}s-R$\acute{e}$nyi networks. The \texttt{SGRDN} algorithm successfully identifies and preserves the most spectrally significant edges, resulting in the sparsified graph shown in Figure 10(b), which maintains a core-periphery structure while dramatically reducing the total edge count. To validate the dynamical preservation capabilities of our approach, we simulate Brusselator reaction-diffusion dynamics on both the original and sparsified networks under random perturbations to the equilibrium point. The spatiotemporal evolution patterns presented in Figures 10(c) and 10(d) demonstrate remarkable similarity between the dynamics on the original and sparsified graphs, with both exhibiting consistent wave propagation characteristics and temporal oscillatory behavior. The preserved pattern formation and synchronization dynamics across the reduced network topology confirm that our spectral-preserving sparsification approach successfully maintains the essential dynamical properties required for accurate simulation of complex networked systems, thereby validating the practical utility of the \texttt{SGRDN} framework for computational efficiency without sacrificing dynamical fidelity. \\}
\begin{table}[H]
\begin{align*}
\begin{tabular}{
|p{2cm}|p{2cm}|p{3cm}|p{2.5cm}|p{2.5cm}|p{2cm}|  }
\hline
\textbf{Graph} & Number of nodes & Number of edges in the given graph & Number of eigenmodes computed ($n_p$) & Number of edges in the sparsified graph & Correlation-coefficient ($R$) \\
\hline 
complex  & 1408 & 46326 & 282 & 39926 & 0.8654\\
\hline 
145bit  & 1002 & 11251 & 201 & 7767 & 0.8737 \\
\hline 
192bit & 2600  & 35880 & 520 & 35880  & 0.9189\\
\hline 
130bit  & 584 & 6058 & 117 & 4350  & 0.8572\\
\hline 
Ca-HepTh  & 8638  & 24806 & 1728 & 22709 & 0.9666
\\
\hline
\end{tabular}
\end{align*}
\caption{Illustration of the \texttt{SGRDN} algorithm on real world graphs.}
\label{Allresults}
\end{table}

 \label{image-eval1}

\begin{figure}[H]
    \caption*{\textbf{Complex}}
    \begin{subfigure}[t]{0.360\linewidth}
        \includegraphics[width=\linewidth,height=6cm]{./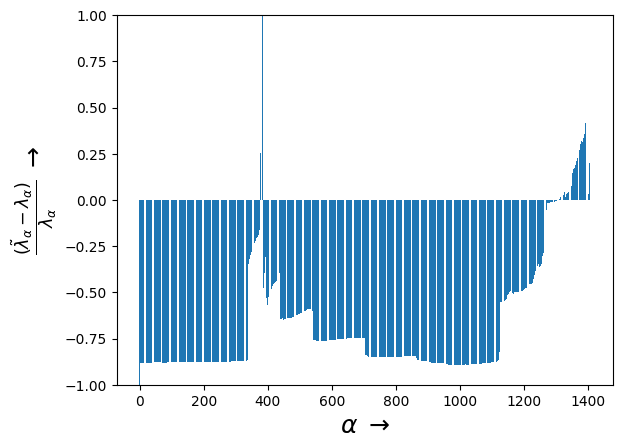}
        \caption*{9.1(a)}
    \end{subfigure}
    \begin{subfigure}[t]{0.320\linewidth}
        \includegraphics[width=\linewidth,height=6cm]{./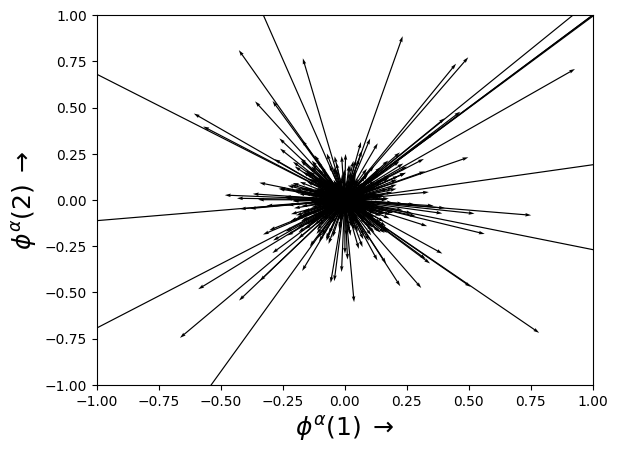}
        \caption*{9.1(b)}
    \end{subfigure}
    \hspace{-3mm}
    \begin{subfigure}[t]{0.320\linewidth}
        \includegraphics[width=\linewidth,height=6cm]{./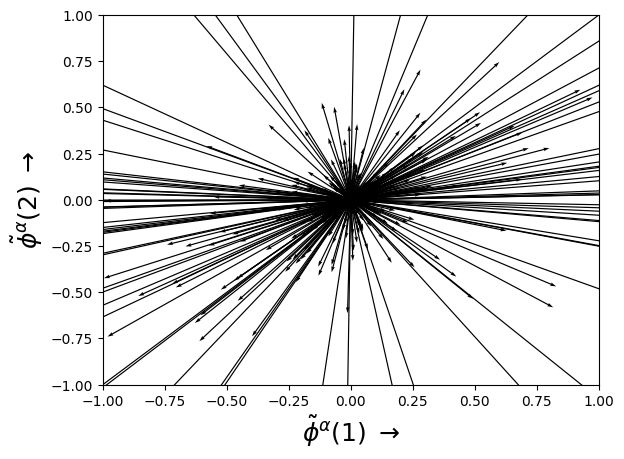}
        \caption*{9.1(c)}
    \end{subfigure}
\end{figure}

\begin{figure}[H]
    \caption*{\textbf{130bit}}
    \begin{subfigure}[t]{0.360\linewidth}
        \includegraphics[width=\linewidth,height=6cm]{./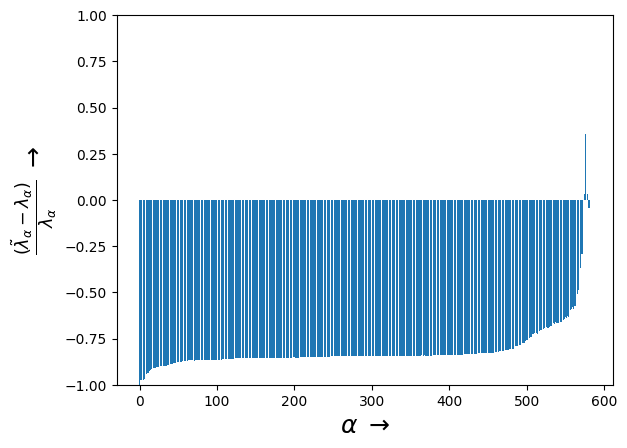}
        \caption*{9.2(a)}
    \end{subfigure}
    \begin{subfigure}[t]{0.320\linewidth}
        \includegraphics[width=\linewidth,height=6cm]{./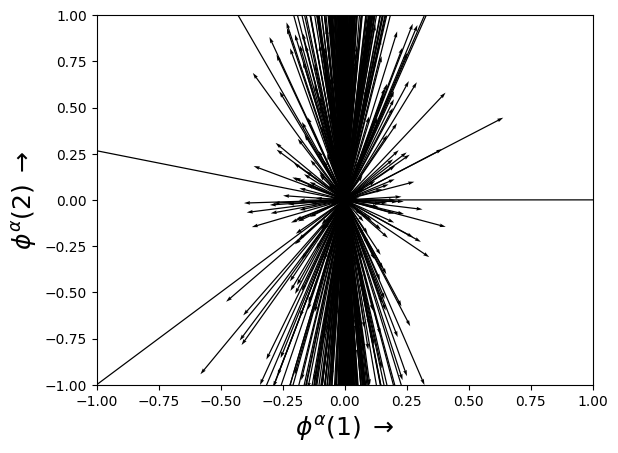}
        \caption*{9.2(b)}
    \end{subfigure}
    \hspace{-3mm}
    \begin{subfigure}[t]{0.320\linewidth}
        \includegraphics[width=\linewidth,height=6cm]{./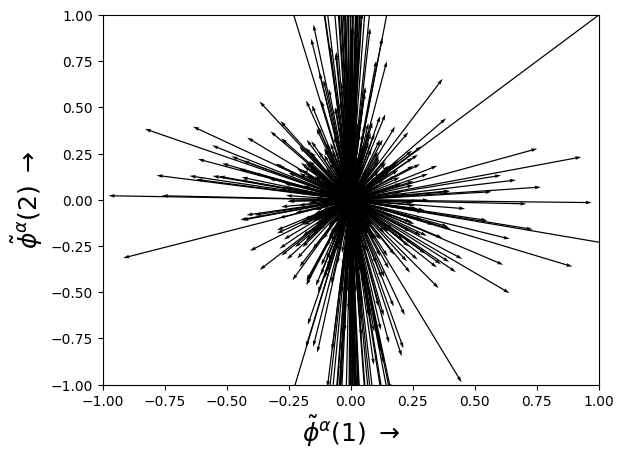}
        \caption*{9.2(c)}
    \end{subfigure}
\end{figure}

\begin{figure}[H]
    \caption*{\textbf{145bit}}
    \begin{subfigure}[t]{0.360\linewidth}
        \includegraphics[width=\linewidth,height=6cm]{./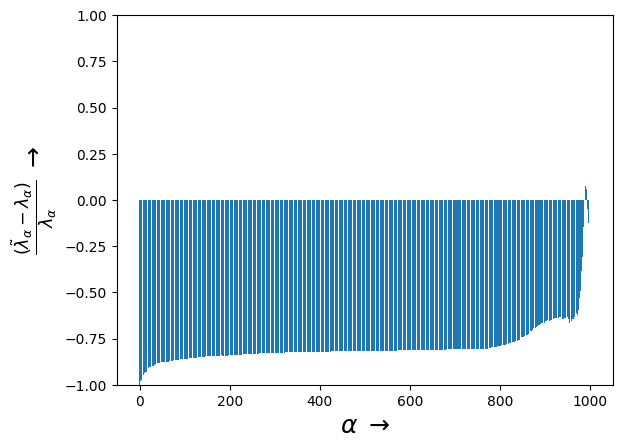}
        \caption*{9.3(a)}
    \end{subfigure}
    \begin{subfigure}[t]{0.320\linewidth}
        \includegraphics[width=\linewidth,height=6cm]{./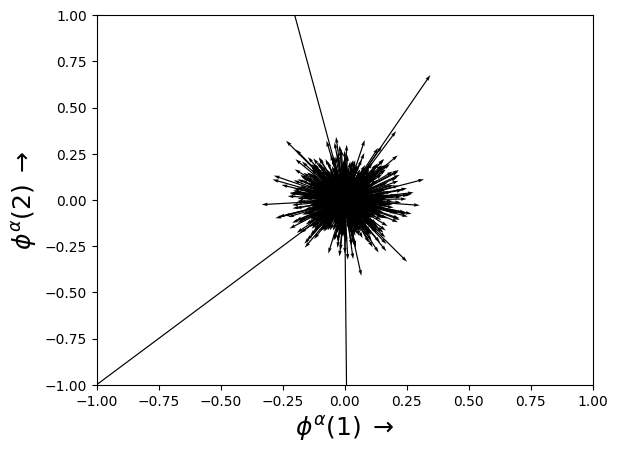}
        \caption*{9.3(b)}
    \end{subfigure}
    \hspace{-3mm}
    \begin{subfigure}[t]{0.320\linewidth}
        \includegraphics[width=\linewidth,height=6cm]{./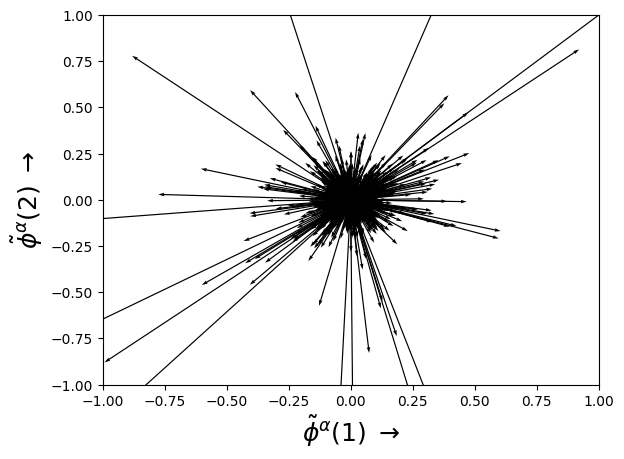}
        \caption*{9.3(c)}
    \end{subfigure}
\end{figure}

\begin{figure}[H]
    \caption*{\textbf{192bit}}
    \begin{subfigure}[t]{0.360\linewidth}
        \includegraphics[width=\linewidth,height=6cm]{./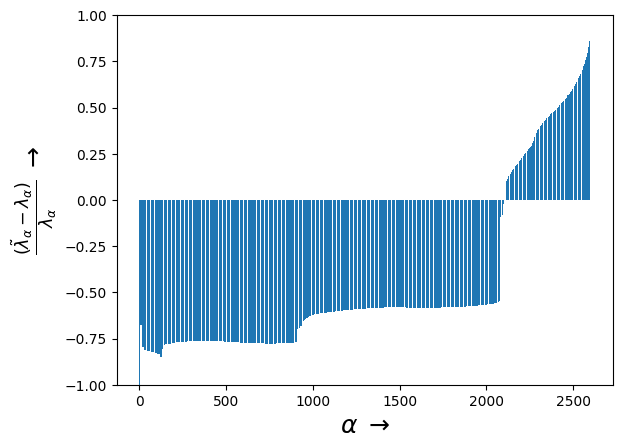}
        \caption*{9.4(a)}
    \end{subfigure}
    \begin{subfigure}[t]{0.320\linewidth}
        \includegraphics[width=\linewidth,height=6cm]{./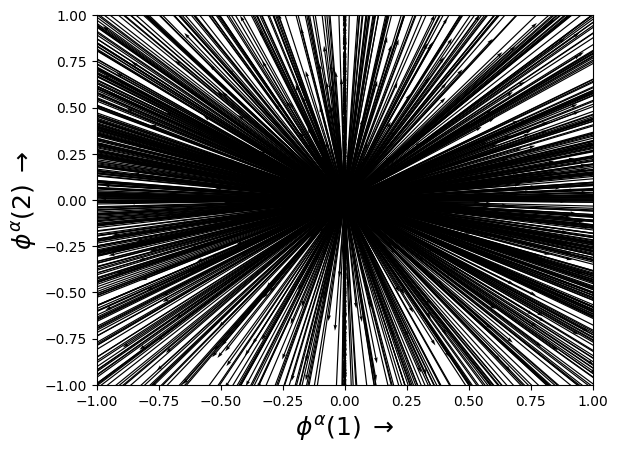}
        \caption*{9.4(b)}
    \end{subfigure}
    \hspace{-3mm}
    \begin{subfigure}[t]{0.320\linewidth}
        \includegraphics[width=\linewidth,height=6cm]{./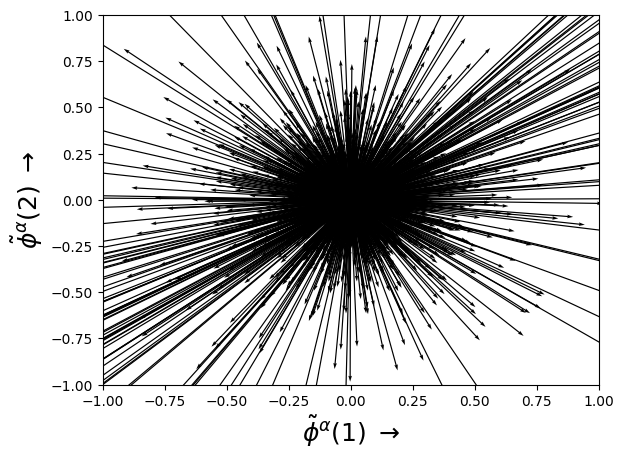}
        \caption*{9.4(c)}
    \end{subfigure}
\end{figure}
\addtocounter{figure}{-4}
\begin{figure}[H]
    \caption*{\textbf{Ca-HepTh}}
    \begin{subfigure}[t]{0.360\linewidth}
        \includegraphics[width=\linewidth,height=6cm]{./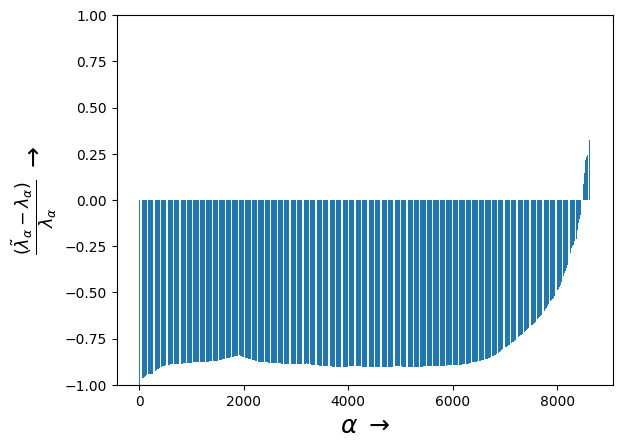}
        \caption*{9.5(a)}
    \end{subfigure}
    \begin{subfigure}[t]{0.320\linewidth}
        \includegraphics[width=\linewidth,height=6cm]{./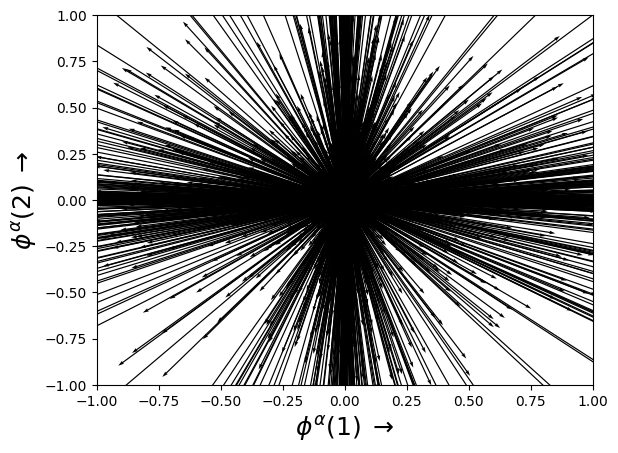}
        \caption*{9.5(b)}
    \end{subfigure}
    \hspace{-3mm}
    \begin{subfigure}[t]{0.320\linewidth}
        \includegraphics[width=\linewidth,height=6cm]{./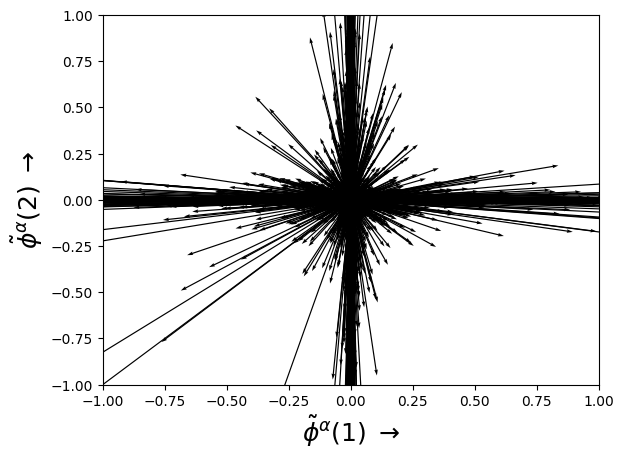}
        \caption*{9.5(c)}
    \end{subfigure}
     \caption{Spectral preservation analysis across five graph datasets. (a) Relative eigenvalue errors $(\hat{\lambda}_{\alpha} - \lambda_{\alpha})/\lambda_{\alpha}$ between original and sparsified graphs. (b) 2D visualization of original Laplacian eigenvectors. (c) 2D visualization of sparsified graph eigenvectors, demonstrating preservation of spectral structure.}
  \label{fig:overall}
\end{figure}
\begin{figure}[H]
    \centering
    \begin{subfigure}[t]{0.48\textwidth}
        \centering
        \includegraphics[width=\linewidth, height=6cm]{./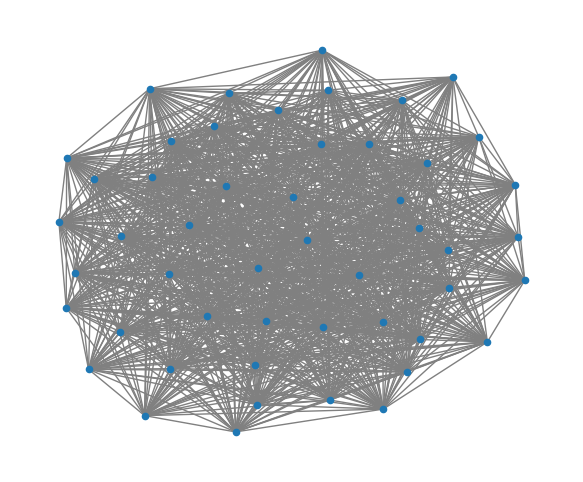}
        \caption{A 50-node Erd\H{o}s-R$\acute{\mathrm{e}}$nyi random graph.}
        \label{fig:original_graph}
    \end{subfigure}
    \hfill
    \begin{subfigure}[t]{0.48\textwidth}
        \centering
        \includegraphics[width=\linewidth, height=6cm]{./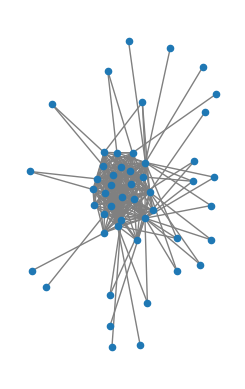}
        \caption{Sparsified graph obtained using the \texttt{SGRDN} algorithm.}
        \label{fig:sparse_graph}
    \end{subfigure}
    
    
    \begin{subfigure}[t]{0.48\textwidth}
        \centering
        \includegraphics[width=\linewidth, height=6cm]{./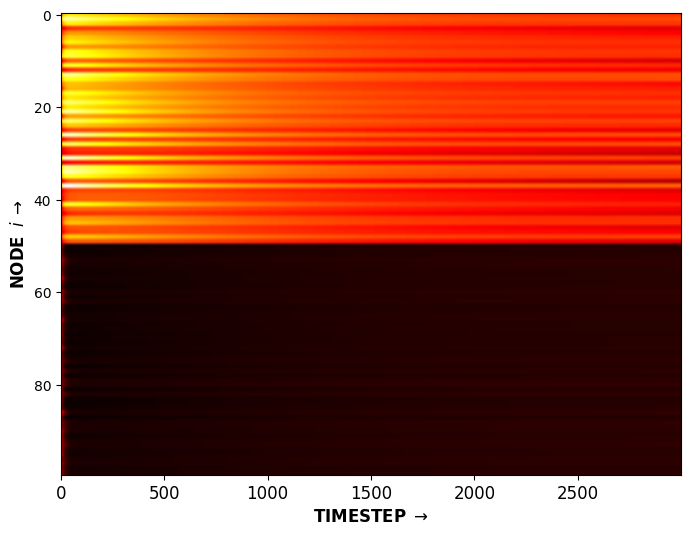}
        \caption{Brusselator reaction-diffusion dynamics on the original graph under random perturbations.}
        \label{fig:dynamics_original}
    \end{subfigure}
    \hfill
    \begin{subfigure}[t]{0.48\textwidth}
        \centering
        \includegraphics[width=\linewidth, height=6cm]{./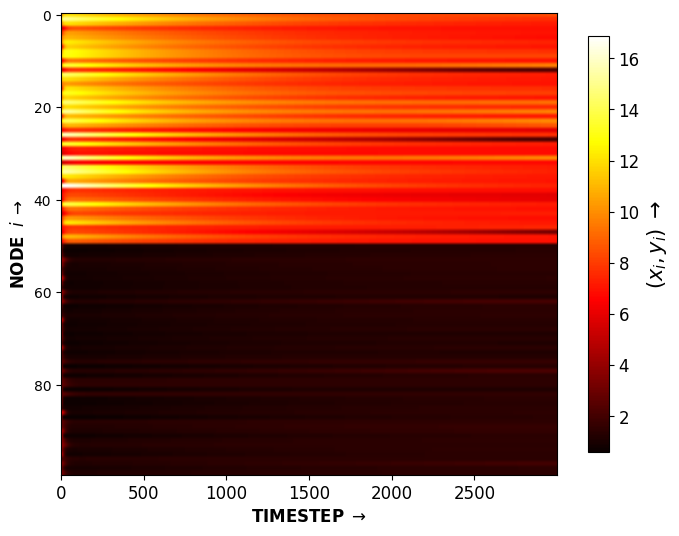}
        \caption{Brusselator reaction-diffusion dynamics on the sparsified graph under random perturbations.}
        \label{fig:dynamics_sparse}
    \end{subfigure}
    
    \caption{Performance demonstration of the \texttt{SGRDN} algorithm on a 50-node Erd\H{o}s-R$\acute{\mathrm{e}}$nyi random graph.}
    \label{randomnode}
\end{figure}
\section{Exploring Sparsity in ODENets: A Motivating Example}
\label{odenetsection}
This section presents a compelling illustration of how the framework can be adapted and applied to generate sparse neural ODENet models. Neural ODEs or Neural Ordinary Differential Equations were first introduced by Chen et al. in their 2018 paper "Neural Ordinary Differential Equations" \cite{DBLP:journals/corr/abs-1806-07366}.
The basic idea behind neural ODEs is to parameterize the derivative of the hidden state using a neural network. This allows the network to learn the system's evolution over time rather than just mapping inputs to outputs. In other words, the neural network becomes a continuous function that can be integrated over time to generate predictions.

One of the key advantages of neural ODEs is that they can be used to model systems with complex dynamics, such as those found in physics, biology, and finance. They have been used for various applications, including image recognition, time series prediction, and generative modelling. This section looks at the possibility of our framework in sparsifying such ODENets using \texttt{SGRDN}. In this section, our focus lies in addressing the following question: Can we mimic the dynamics of a dynamical system on a neural ODENet with a collection of snapshots of solutions using a sparse set of parameters by leveraging the adjoint sensitivity framework within a reduced-dimensional setting? \\
Let us consider an example of a linear dynamical system as defined in the following. 
\begin{equation}
\frac{dx}{dt} = Ax + b,\;\; A\in \mathbb{R}^{6\times6},\; b \in \mathbb{R}^6.
\label{LiDS}    
\end{equation}

For a predefined initial condition $x(0) = x_0$, the above problem will be an IVP. The neural network architecture proposed has one hidden layer with 50 neurons, with the input and output layers having six neurons for our experiment. A nonlinear activation is given at the output layer. The neural network output function proposed is given by \[nn(x(t)) = \sinh[\theta_2\theta_1x(t) + \theta_2 b_1 + b_2]\]
\[\text{\,\,} \theta_1 \in \mathbb{R}^{50\times6}, \theta_2 \in \mathbb{R}^{6\times50}, b_1 \in \mathbb{R}^{50}, b_2 \in \mathbb{R}^6.\]

\subsubsection*{Procedure}
\begin{enumerate}
    \item Generate the projection matrix $\rho \in \mathbb{R}^{2\times 6}$ and mean $\bar{x}\in\mathbb{R}^6$ of projection based on the POD method (Section \ref{ROMappro}) using the snapshot matrix consisting of data points from the linear dynamical system (Eq. \ref{LiDS}).
    \item Sample $p$ data points without replacement from the snapshot matrix for assimilation, denoting this data as $\mathcal{D} = \{ x_i\}_{i = 1,2,\ldots, p}$. Using the projection matrix, project the data into reduced dimension ($\mathcal{D}_{\text{proj}} = \{\rho(x_i - \bar{x})\}_{i = 1,2,\ldots, p}$). The projected neural network output function is given by,
    \[(nn_{\textbf{proj}}(z(t)) = \rho\, \sinh[\theta_2\theta_1\rho^Tz(t) + \theta_2\theta_1\bar{x} + \theta_2b_1 + b_2]\]
    \item For the sake of simplicity, we assume the parameters to be constant over time intervals. The adjoint sensitivity method for data assimilation is used to recover the parameters. The Euler forward method is used for the discretization of the projected neural ODE slope function. The cost function used for obtaining the parameters is given by (Eq. \ref{costfuncadjoint}). For this experiment, we utilized a set of 10 observations obtained from a single trajectory, randomly selected from a pool of 500 timesteps. Additionally, the chosen regularization parameter $\alpha$ was set to 40.

\begin{figure}[H]
  \centering
  \begin{minipage}[b]{0.5\linewidth}
    \centering
    \includegraphics[width=\linewidth, height = 6.2cm]{./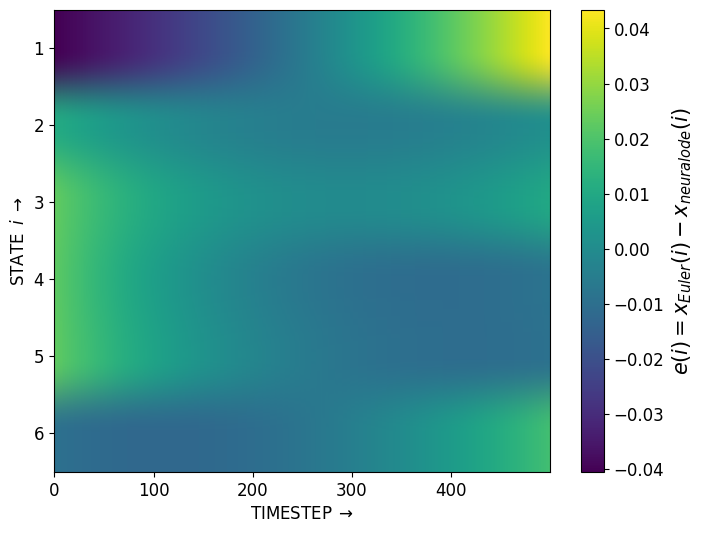}
    \caption{Error in the solution obtained from neural ODENet compared with the Euler forward method on the linear dynamical system (Eq. \ref{LiDS}).}
    \label{errornode}
  \end{minipage}
  \hfill
  \begin{minipage}[b]{0.40\linewidth}
    \centering
    \includegraphics[width=\linewidth, height =7cm]{./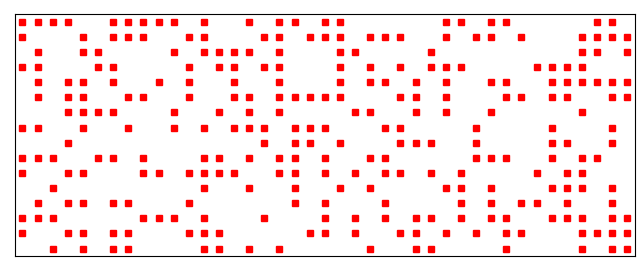}
    \caption{Sparsity pattern in the parameter vector of the neural ODENet.}
    \label{sparsitypattern}
  \end{minipage}
\end{figure}
\end{enumerate}
No constraints on the neural ODENet parameters are imposed, so the parameter estimation problem is unconstrained. \textcolor{black}{Figure \ref{errornode} illustrates the temporal evolution of the approximation error between the neural ODENet solution and the reference Euler forward method applied to the linear dynamical system (Equation \ref{LiDS}). The error surface reveals that the neural network successfully captures the underlying dynamics with minimal deviation, particularly in the early time steps, with errors remaining within acceptable bounds ($|\text{error}| < 0.04$) throughout the simulation period. 
Following the optimization process, the learned parameter matrix exhibits a distinct sparsity structure, as visualized in Figure \ref{sparsitypattern}, where red entries represent non-zero parameters and white entries correspond to parameters that have been effectively pruned to zero during training. Out of the total parameter space of 656 parameters, only 289 parameters remain non-sparse (56\% reduction), representing a significant reduction in model complexity while maintaining dynamical accuracy. This emergent sparsity pattern demonstrates that the neural ODENet naturally identifies the most critical connections required to represent the underlying system dynamics, providing both computational efficiency and interpretability advantages over dense parameter representations.} 

\section{Summary}
We present \texttt{SGRDN}, the first framework for the efficient sparsification of reaction-diffusion systems on undirected graphs (Figure \ref{datsapproach}). This framework formulates sparsification as a dynamic optimization problem operating on a subset of trajectories within a ROM space. The optimization objective maximizes the preservation of original patterns within the reduced dimension while simultaneously enforcing sparsity on the weight vector.

Our approach demonstrates strong performance, achieving significantly high correlation coefficients between the sparse and original graph solutions, even under random edge perturbations. Specifically, Table \ref{tab:my_label} validates the accuracy of our proposed eigenpair approximations (Theorem \ref{theorem_estimate}) against the actual eigenpairs for several real-world graphs. Furthermore, \texttt{SGRDN} proved effective in sparsifying ODENets, as evidenced by the resulting sparsity patterns (Figure \ref{sparsitypattern}).

To further enhance computational efficiency, we leveraged multi-core processing for evaluating the error function ($\bar{\zeta}$), resulting in a significant speedup (Figure \ref{speedupfigure}). The use of ROM-based corrections was critical in maintaining this strong computational performance across both random and real-world graphs (Figures \ref{randomgraph_rom} and \ref{realgraph_rom}).

\par \textbf{Scope and Constraints.} The \texttt{SGRDN} framework is generalizable to any reaction-diffusion system involving undirected graphs. However, sparsifying systems involving directed graphs, such as the heterogeneous Kuramoto model \cite{kuramoto_1984} and the Ginzgurg-Landau model \cite{garcía-morales_krischer_2012}, remains outside the scope of this work.

\section*{Acknowledgments}
The authors thank Prof. Dinesh Kumar, Prof. S. Lakshmivarahan, Prof. Arun Tangirala and anonymous reviewers for their valuable and encouraging comments, which helped produce an improved version of the work. This work was partially supported by the MATRIX grant MTR/2020/000186 of the Science and Engineering Research Board of India.
\bibliographystyle{plain}
\bibliography{bibfile}
\end{document}